\newcommand*{\colonequals}{\coloneqq}
\let\underbrace\LaTeXunderbrace
\newcommand\blfootnote[1]{%
  \begingroup
  \renewcommand\thefootnote{}\footnote{#1}%
  \addtocounter{footnote}{-1}%
  \endgroup
}
\newenvironment{longproof}%
{%
  \ifthenelse{\isundefined{\showproofs}}%
  {\expandafter\comment}%
  {\begin{proof}[Proof Details]}%
}%
{%
  \ifthenelse{\isundefined{\showproofs}}%
  {\expandafter\endcomment}%
  {\end{proof}}%
}
  \renewcommand*{\showkeyslabelformat}[1]{%
    \fbox{\parbox[t]{1.5cm}{\raggedright\normalfont\scriptsize\url{#1}}}}
\tikzstyle{process} = [rectangle, minimum width=2cm, minimum height=0.8cm,
\tikzstyle{arrow} = [thick,->,>=stealth]
\title{Breaking the Loop \\
  Recursive Proofs for Coinductive Predicates in Fibrations}
\author{
  Henning Basold \\
  CNRS, ENS de Lyon \\
  \texttt{henning.basold@ens-lyon.fr}
}
\date{}
\begin{document}

\maketitle{}

\begin{abstract}
  The purpose of this paper is to develop and study recursive proofs of
  coinductive predicates.
  Such recursive proofs allow one to discover proof goals in the construction
  of a proof of a coinductive predicate, while still allowing the use of
  up-to techniques.
  This approach lifts the burden to guess invariants, like bisimulation
  relations, beforehand.
  Rather, they allow one to start with the sought-after proof goal and
  develop the proof from there until a point is reached, at which the proof
  can be closed through a recursion step.
  Proofs given in this way are both easier to construct and to understand,
  similarly to proofs given in cyclic proof systems or by appealing
  parameterised coinduction.

  In this paper, we develop a framework for recursive proofs of coinductive
  predicates that are given through fibrational predicate liftings.
  This framework is built on the so-called later modality, which has made its
  appearance in type theoretic settings before.
  In particular, we show the soundness and completeness of recursive
  proofs, we prove that compatible up-to techniques can be used as
  inference rules in recursive proofs, and provide some illustrating examples.
\end{abstract}

\ifthenelse{\isundefined{\development}}{
}{
  \blfootnote{Base revision~\gitAbbrevHash, \gitReferences~from~\gitAuthorDate}
}

\section{Introduction}
\label{sec:intro}

Recursion is one of the most fundamental notions in Computer Science and
Mathematics, be it as the foundation of computability, or to define and
reason about structures determined by repeated constructions.
In this paper, we will focus on the use of recursion as a proof method
for coinductive predicates.

The usual way to prove that some objects are contained in a coinductive
predicate or are related by a coinductive relation, is to establish
an invariant.
More specifically, suppose $\Phi \from L \to L$ is a monotone
function on a lattice and $\Phi$ that has a greatest fixed point
$\nu \Phi$.
One proves that the coinductive predicate $\nu \Phi$ holds for $x \in L$
by establishing a $y \in L$ with $x \leq y \leq \Phi(y)$.
This approach does, however, not fit common practice, as one usually
incrementally constructs the invariant $y$, rather than guessing it, while
following the necessary proof steps.
Such an incremental construction leads to a recursive proof methodology.

There are several ways that have been proposed to formalise the idea of
recursive proofs for coinductive predicates.
In the setting of complete lattices, Hur et al.~\cite{Hur13:ParameterizedCoind}
developed so-called parameterised coinduction.
Their techniques were later streamlined using the companion by
Pous~\cite{Pous:FinalUpTo-LICS16}.
Another approach is to use ideas from game theory~\cite{Niwinski96:GamesMuCalc,%
  Santocanale02:muBicompleteParity}
to prove coinductive predicates.
There are also type theoretic approaches that use systems of equations
to prove coinductive predicates~\cite{Abel:SizedTypes,%
  Birkedal:GuardedRecUniverseFP,Bizjak16:GuardedDTT,Gimenez-RecursiveSchemes}.
Finally, recursion has also entered syntactic proof systems in the form of
cyclic proof systems, e.g.~\cite{Brotherston05:CyclicFOLInductiveDef,%
  Cockett01:Deforestation,Dax06:ProofSysLinearTimeMuCalc,%
  RosoLucanu09:CircCoinduction,Santocanale02:CircProofs}.
Cyclic proof systems are particularly useful in settings that require proofs by
induction or coinduction because cyclic proof systems ease proofs enormously
compared to, for example, invariant-based method from above or (co)induction
schemes.
Nothing comes for free though:
In this case checking proofs becomes more difficult, as the correctness
conditions are typically global for a proof tree and not compositional.
For the same reason, also soundness proofs a often rather complex.

In this paper, we will study an approach to proving coinductive predicates
through recursive proofs.
Recursion in such proofs is thereby controlled by using the so-called later
modality~\cite{Nakano00:ModalityRec}, which allows checking of recursive proofs
on a per-rule basis.
This results in straightforward proof checking, a per-rule soundness proof,
and proofs that can be easily debugged.
We will thereby develop the recursive proofs abstractly for a general
first-order logic, given in form of a fibration.
This generality allows us to obtain recursive proofs for coinductive predicates
in many different settings.
In particular, we will discuss set-based predicates, quantitative
predicates, syntactic first-order logic, and (models of) dependent type theory.
An instance of this is the syntactic first-order logic given by the author
in~\cite{Basold17:Phd} to reason about program equivalences.
This instance was also the original motivation of the present paper, as the
results in loc.~cit. are mostly obtained by hand.

Towards this, we proceed as follows.
In \iSecRef{func-cat-fib}, we show that certain fibrations of functors are
fibred Cartesian closed, which is the technical machinery that makes recursive
proofs work.
Next, we develop in \iSecRef{sequences} and \iSecRef{fib-seq} a theory of
descending chains of predicates in general categories and fibrations,
respectively.
In the same sections, we also provide the necessary results for the construction
of recursive proofs.
\secRef{fin-chain-up-to} provides some specific results concerning the
descending chain that is induced by a lifting of a behaviour functor.
In particular, we show how up-to techniques can be used as proof rules.
We instantiate these results in \iSecRef{examples} to obtain recursive proofs
for some illustrative examples.

\paragraph{Related Work}

To a large part, the present paper develops many results of
Birkedal et~al.~\cite{Birkedal12:GuardedDomainTheory} in the setting of
general fibrations rather than just the codomain fibration
$\ArrC{\SetC} \to \SetC$ of sets.
That~\cite{Birkedal12:GuardedDomainTheory} was so restrictive is not so
surprising, as the intention there was to construct models of programming
languages, rather than applying the developed techniques to proofs.
Going beyond the category of sets also means that one has to involve much
more complicated machinery to obtain exponential objects.
Later, Bizjak et al.~\cite{Bizjak16:GuardedDTT} extended the techniques
from~\cite{Birkedal12:GuardedDomainTheory} to dependent type
theory, thereby enabling reasoning by means of
recursive proofs in a syntactic type theory.
However, also this is again a very specific setting, which rules out most
examples that we are interested in here.
Similarly, also the parameterised coinduction in~\cite{Hur13:ParameterizedCoind}
is too restrictive, as it applies only to lattices.
It might be possible to develop parameterised coinduction in the setting
of fibrations by using the companion~\cite{Pous:FinalUpTo-LICS16,%
  Pous17:CompanionCodensCausal,BPR17:MonoidalCompany}.
We leave this for another time though.
\section{Functor Categories and Fibrations}
\label{sec:func-cat-fib}

We fix an index category $\Cat{I}$ in the following and define
\begin{equation*}
  \AxiomC{$F \from \Cat{C} \to \Cat{D}$}
  \UnaryInfC{$\chLift{F} \from \ICat{\CCat} \to \ICat{D}$}
  \DisplayProof
\end{equation*}
by $\chLift{F}(\sigma) = F \comp \sigma$.
Note that $\chLift{F} = \IntHom{I}{F}$, where
$\IntHom{I}{-} \from \CatC \to \CatC$ is the strict 2-functor
that assigns to a category $\Cat{C}$ the functor category $\ICat{\CCat}$.
Thus, $\chLift{(-)}$ preserves composition of functors and applies to natural
transformations as well.
We use this to define for a morphism $f \from X \to Y$ in $\Cat{C}$, a morphism
$\chLift{f} \from K_X \natTo K_Y$ in $\ICat{\CCat}$ where $K_X$ is the constant
functor sending any object in $I$ to $X$:
Note that there is a natural transformation $K_f \from K_X \natTo K_Y$, which
is given by $K_{f,I} = f$.
Thus, we can put $\chLift{f} = \IntHom{\Cat{I}}{K_f}$.

\begin{lemma}
  \label{lem:indexed-lifting}
  If $F \from \Cat{C} \to \Cat{D}$ and $G \from \Cat{D} \to \Cat{C}$
  with $F \dashv G$, then $\chLift{F} \dashv \chLift{G}$.
\end{lemma}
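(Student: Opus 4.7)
The plan is to transfer the adjunction pointwise along $\chLift{(-)}$, exploiting the observation already made in the setup that $\chLift{(-)} = \IntHom{\Cat{I}}{-}$ is a strict 2-functor $\CatC \to \CatC$. Since any 2-functor preserves internal adjunctions (adjunctions presented by a unit, a counit, and the triangle identities), this immediately yields $\chLift{F} \dashv \chLift{G}$.

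Concretely, I would start from a unit $\eta \from \mathrm{Id}_{\Cat{C}} \natTo G \comp F$ and counit $\epsilon \from F \comp G \natTo \mathrm{Id}_{\Cat{D}}$ satisfying the triangle identities, and take $\hat{\eta} \colonequals \chLift{\eta} \from \mathrm{Id}_{\ICat{\CCat}} \natTo \chLift{G} \comp \chLift{F}$ and $\hat{\epsilon} \colonequals \chLift{\epsilon} \from \chLift{F} \comp \chLift{G} \natTo \mathrm{Id}_{\ICat{D}}$. Because $\chLift{(-)}$ preserves composition of functors and respects both vertical and horizontal composition of natural transformations (being a 2-functor), the images of the two triangle identities for $(\eta,\epsilon)$ are exactly the triangle identities for $(\hat{\eta},\hat{\epsilon})$, so nothing remains to be checked beyond noting that $\chLift{(-)}$ sends identity natural transformations to identities.

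Alternatively, and more explicitly, one may construct the hom-set bijection directly. Given $\sigma \in \ICat{\CCat}$ and $\tau \in \ICat{D}$, a morphism $\alpha \from \chLift{F}(\sigma) \natTo \tau$ in $\ICat{D}$ consists of components $\alpha_I \from F(\sigma I) \to \tau I$ satisfying naturality in $I$. Transposing each component under the bijection $\mathrm{Hom}_{\Cat{D}}(F-,-) \cong \mathrm{Hom}_{\Cat{C}}(-,G-)$ yields a family $\bar{\alpha}_I \from \sigma I \to G(\tau I)$. Naturality of $\bar\alpha$ in $I$ is obtained from naturality of $\alpha$ together with the naturality of the transposition in both arguments of the adjunction. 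The inverse assignment is defined symmetrically, and naturality of the resulting bijection in $\sigma$ and $\tau$ follows again from the naturality of the pointwise bijection.

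The main obstacle is purely bookkeeping: verifying that naturality of $\alpha$ in the index $I$ is preserved under pointwise transposition. This reduces to a diagram chase combining the naturality squares of $\alpha$ with the hom-naturality squares of $F \dashv G$, which is routine once written out. No property of $\Cat{I}$, $\Cat{C}$, $\Cat{D}$, $F$, or $G$ beyond the given adjunction is needed, so the result holds in full generality.
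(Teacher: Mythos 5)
Your proposal is correct, and it actually contains two routes. The second, explicit route --- transposing each component $\alpha_I \from F(\sigma I) \to \tau I$ pointwise under the bijection $\Hom{\Cat{D}}{F-}{-} \cong \Hom{\Cat{C}}{-}{G-}$ and deducing naturality in $I$ and in $(\sigma,\tau)$ from the (unique, natural) pointwise correspondence --- is essentially the paper's own proof, which likewise derives the correspondence $\chLift{F}\sigma \to \tau$ over $\sigma \to \chLift{G}\tau$ from the pointwise one and gets naturality from its uniqueness. Your primary route is genuinely different and slicker: since $\chLift{(-)} = \IntHom{\Cat{I}}{-}$ is a strict 2-functor on $\CatC$ and adjunctions are an internal notion in any 2-category (unit, counit, triangle identities), they are preserved by any (strict or pseudo) 2-functor, so $\chLift{\eta}$ and $\chLift{\epsilon}$ immediately exhibit $\chLift{F} \dashv \chLift{G}$. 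This buys a proof with no diagram chase at all, at the cost of invoking the 2-categorical formulation of adjunctions; it is very much in the spirit of the paper, which elsewhere (in the proof that $\chLift{(-)}$ is a fibred functor) already exploits 2-functoriality and enriched-adjointness of $\IntHom{\Cat{I}}{-}$. Either argument is complete; the only point worth making explicit in the 2-functorial version is that $\chLift{(-)}$ preserves whiskering and identities of natural transformations, which is exactly what strict 2-functoriality gives you.
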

\begin{proof}
  Given $F$ and $G$ as above, the following unique correspondence
  follows from the point-wise unique correspondence given by the adjunction
  $F \dashv G$.
  That this correspondence is natural also follows from uniqueness
  of the point-wise correspondence.
  \qedhere
\end{proof}

\begin{lemma}
  \label{lem:seq-fibred-functor}
  The functor $\chLift{(-)}$ extends to a fibred functor on the
  (large) fibration $\FibC \to \CatC$.
\end{lemma}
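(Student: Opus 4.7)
The plan is to unfold what it means for $\chLift{(-)}$ to extend to a fibred endofunctor on $\FibC \to \CatC$ and then verify each requirement, exploiting that $\chLift{(-)}$ is the strict $2$-functor $\IntHom{\Cat{I}}{-}$. Concretely, I need to exhibit: (a) an assignment sending each fibration $p \from \Cat{E} \to \Cat{C}$ to a fibration over $\ICat{\CCat}$; (b) an assignment on morphisms of $\FibC$ that is functorial and lies over $\chLift{(-)}$; and (c) preservation of Cartesian morphisms of $\FibC \to \CatC$, i.e.\ of pullback squares of fibrations.

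For (a), I set $\chLift{p}$ to be $\chLift{(-)}$ applied to $p$ viewed as a functor, which tautologically lies over $\chLift{\Cat{C}} = \ICat{\CCat}$. The content is that $\chLift{p} \from \ICat{\Cat{E}} \to \ICat{\CCat}$ is again a fibration. I would prove this by constructing Cartesian lifts pointwise: given $X \from \Cat{I} \to \Cat{E}$ and a natural transformation $\alpha \from \tau \natTo p \comp X$ in $\ICat{\CCat}$, choose Cartesian lifts $\bar\alpha_I \from \bar X_I \to X_I$ of each component $\alpha_I$ in $\Cat{E}$. The universal property of Cartesian morphisms, applied to the naturality squares of $\alpha$, then induces the action of $\bar X$ on arrows of $\Cat{I}$ as well as the naturality of $\bar\alpha$. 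Any lifting problem in $\ICat{\Cat{E}}$ against $\bar\alpha$ can be solved componentwise, and the resulting family is automatically natural by uniqueness, so $\bar\alpha$ is indeed Cartesian in $\ICat{\Cat{E}}$.

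For (b), a morphism of $\FibC$ is a commuting square whose horizontal part $\tilde F$ is a fibred functor over a base functor $F$. Applying $\chLift{(-)}$ to each side yields a commuting square by the functoriality of $\chLift{(-)}$ already noted, and $\chLift{\tilde F}$ preserves the pointwise Cartesian morphisms of $\chLift{p}$ identified in step (a) because $\tilde F$ preserves Cartesian morphisms of $p$ componentwise. Functoriality of the whole assignment on $\FibC$ is inherited from that of $\chLift{(-)}$ on $\CatC$. For (c), pullback squares of fibrations are preserved because $\IntHom{\Cat{I}}{-}$ is a right adjoint (to $\Cat{I} \times -$) and therefore preserves the relevant limit in $\CatC$; one then checks that this pullback square in $\CatC$ is also a Cartesian morphism in $\FibC \to \CatC$, which is immediate since the base functor is a plain pullback and the total functor is fibred by (b).

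The main obstacle lies in step (a): coherently assembling pointwise Cartesian lifts into a single Cartesian lift in $\ICat{\Cat{E}}$. Although morally this is the familiar \emph{limits in functor categories are pointwise} principle transported to the Cartesian setting, it requires a careful invocation of the universal property to both define the functorial action on the apex $\bar X$ and to verify that the overall lift satisfies the universal property. Steps (b) and (c), by contrast, are formal consequences of $\chLift{(-)}$ being a right $2$-adjoint.
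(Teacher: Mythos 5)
Your proposal is correct and takes essentially the same route as the paper: the same three-step decomposition into (i) $\chLift{p}$ being a fibration via pointwise Cartesian lifts (which the paper delegates to Jacobs, Ex.~1.8.8, and Streicher), (ii) preservation of maps of fibrations, and (iii) fibredness of $\chLift{(-)}$ from preservation of strict 2-pullbacks because it is a right (2-)adjoint. The only difference is that you spell out the pointwise-lift argument where the paper cites a reference.
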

\begin{proof}
  A fibration $p \from \TCat \to \BCat$ induces a fibration
  $\chLift{p} \from \chLift{\TCat} \to \chLift{\BCat}$,
  see~\cite[Ex.~1.8.8]{Jacobs1999-CLTT} and \cite{Streicher2018:FibredCats}.
  Given a map of fibration $(F, G)$, easily shows that
  $(\chLift{F}, \chLift{G})$ is again a map of fibrations.
  Finally, that $\chLift{(-)}$ is fibred follows from the fact that
  $\chLift{(-)}$ preserves strict 2-pullbacks, since it is an
  enriched right adjoint functor~\cite{Kelly82:ConceptsEnrichedCatTheory}.
  \qedhere
\end{proof}

Let $S \from \op{\Cat{I}} \times \Cat{I} \to \Cat{C}$ be a functor.
The \emph{end} of $S$ is an object $\int_{i \in I} S(i,i)$ in $\Cat{C}$
together with a universal extranatural transformation
$\pi \from \int_{i \in I} S(i,i) \to S$.
Concretely, this means that $\alpha$ is a family of morphisms indexed
by objects in $i$, such that the following diagram commutes for all
$u \from i \to j$.
\begin{equation*}
  \begin{tikzcd}
    \int_{i \in I} S(i,i) \rar{\pi_j} \dar{\pi_i}
    & S(j,j) \dar{S(u,\id)} \\
    S(i,i) \rar{S(\id, u)}
    & S(i,j)
  \end{tikzcd}
\end{equation*}
Moreover, given any other extranatural transformation $\alpha \from X \to S$
there is a unique $f \from X \to \int_{i \in I} S(i,i)$ with
$\pi_i \comp f = \alpha_i$ for every $i \in \Cat{I}$.

It is well-known that ends can be computed as certain limits in $\Cat{C}$.
By analysing carefully the necessary limits, we obtain the following result.
\begin{proposition}
  \label{prop:diagrams-CCC}
  Let $\Cat{I}$ be a small category and $\Cat{C}$ a category that has finite
  limits and for every object $i \in \Cat{I}$ products of the size of
  the coslice category $\coslice{i}{\Cat{I}}$.
  If $\Cat{C}$ is Cartesian closed, then also $\ICat{\CCat}$ is.
  The exponential object is then given by
  \begin{equation*}
    \parens*{G^F}(i) = \int_{i \to j} G(j)^{F(j)}.
  \end{equation*}
\end{proposition}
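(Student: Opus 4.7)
The plan is to exhibit the adjunction $(-\times F)\dashv (-)^F$ in $\ICat{\CCat}$ where $G^F$ is defined by the end formula in the statement. Because finite limits in $\ICat{\CCat}$ are computed pointwise and $\Cat{C}$ has finite products, the binary product is $(H\times F)(i)=H(i)\times F(i)$, so we just need to produce natural bijections
\[
  \ICat{\CCat}(H\times F, G) \;\cong\; \ICat{\CCat}(H, G^F).
\]

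First, I would check that the end defining $G^F(i)$ exists in $\Cat{C}$. The end is indexed by the coslice $\coslice{i}{\Cat{I}}$: its objects are arrows $u\from i\to j$, and it can be computed as the equalizer of two parallel maps between a product ranging over objects of $\coslice{i}{\Cat{I}}$ and a product ranging over its morphisms. The first product is precisely of size $|\coslice{i}{\Cat{I}}|$, which is why this product is hypothesised; the equalizer and the exponentials $G(j)^{F(j)}$ exist because $\Cat{C}$ is finitely complete and Cartesian closed. I would then make $G^F$ functorial in $i$ by precomposition: a morphism $v\from i\to i'$ induces a functor $\coslice{i'}{\Cat{I}}\to\coslice{i}{\Cat{I}}$, $u\mapsto u\comp v$, giving a canonical map between the ends by the universal property.

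Next, I would construct evaluation $\mathrm{ev}\from G^F\times F\to G$. At stage $i$, the projection $\pi_{\id_i}\from G^F(i)\to G(i)^{F(i)}$ associated with the identity arrow in $\coslice{i}{\Cat{I}}$, followed by the ordinary evaluation in $\Cat{C}$, yields $\mathrm{ev}_i$. Naturality in $i$ uses the extranaturality square of the end, precisely applied to an arbitrary $v\from i\to i'$. Conversely, given $\alpha\from H\times F\to G$, I transpose componentwise: for each arrow $u\from i\to j$ in $\Cat{I}$, the composite
\[
  H(i)\xrightarrow{H(u)} H(j) \xrightarrow{\widetilde{\alpha_j}} G(j)^{F(j)},
\]
where $\widetilde{\alpha_j}$ is the $\Cat{C}$-exponential transpose of $\alpha_j\from H(j)\times F(j)\to G(j)$, forms a family that I would show to be extranatural in $j$, using naturality of $\alpha$ together with the adjunction $(-)\times F(j)\dashv(-)^{F(j)}$ in $\Cat{C}$. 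The end's universal property then produces the unique $\hat\alpha_i\from H(i)\to G^F(i)$, and naturality in $i$ again follows from functoriality of the end under precomposition.

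Finally, I would verify the two triangle identities: $\mathrm{ev}\comp(\hat\alpha\times\id)=\alpha$ follows by composing with $\pi_{\id_i}$ and using the $\Cat{C}$-level transpose identity, while uniqueness of $\hat\alpha$ is forced by the universal property of the end together with the fact that $\pi_u$ for all $u\from i\to j$ are jointly monomorphic. The main obstacle is the bookkeeping in the extranaturality check: ensuring that the two ways of going around the square
\[
  H(i)\to H(j)\to G(j)^{F(j)}\to G(j)^{F(j')}
  \quad\text{vs.}\quad
  H(i)\to H(j')\to G(j')^{F(j')}\to G(j)^{F(j')}
\]
for $u'\from j\to j'$ in $\Cat{I}$ coincide. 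This amounts to a diagram chase that pairs the naturality square of $\alpha$ at $u'$ with the standard identities relating exponential transposition to pre- and post-composition, and it is the only place where the structure of $\Cat{C}$ as a CCC is used non-trivially beyond the defining adjunction.
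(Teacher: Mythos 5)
Your proposal is correct and follows essentially the same route as the paper: the exponential is defined by the end over the coslice category $\coslice{i}{\Cat{I}}$, computed as an equaliser of products whose existence is exactly what the size hypothesis guarantees, and then shown to be right adjoint to the pointwise product. The only difference is that you carry out the adjunction verification (evaluation via the projection at $\id_i$, componentwise transposition, the extranaturality check, and uniqueness from joint monicity of the end projections) explicitly, whereas the paper delegates precisely this step to folklore references.
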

\begin{proof}
  More precisely, we define for each $i \in I$ a functor
  $S_i \from \op{(\coslice{i}{\Cat{I}})} \times \coslice{i}{\Cat{I}}
  \to \Cat{C}$
  by $S(i \to j, i \to k) = G(k)^{F(j)}$ and
  $S(f \from j' \to j, g \from k \to k') = G(g)^{F(f)}$.
  The end of $S$ is then given by the equaliser as in the following diagram.
  \begin{equation*}
    \begin{tikzcd}
      \int_{i \to j} G(i)^{F(i)} \rar{}
      & \prod_{\Hom{\Cat{I}}{i}{j}} G(j)^{F(j)}
      \arrow[shift left]{r}{}
      \arrow[shift right]{r}{}
      & \prod_{\Hom{\Cat{I}}{i}{j}, \Hom{\Cat{I}}{i}{k}} G(k)^{F(j)}
    \end{tikzcd}
  \end{equation*}
  That such an equaliser gives indeed the end of $S$ is standard.
  Note that both products range only over objects in the coslice
  category $\coslice{i}{\Cat{I}}$, hence the products exist in
  $\Cat{C}$.
  Finally, that the given definition of $G^F$ is an exponential
  object is folklore, see~\cite{Shulman:FuncCatCCC} and
  cf.~\cite[Thm.~2.12]{Street10:ComprehensiveTorsors}.
  \qedhere
\end{proof}

Given that we can construct exponential objects as certain ends, one reasonably
might expect that this also works for fibred Cartesian closed categories,
which are fibrations $p \from \TCat \to \BCat$ in which every every fibre is
Cartesian closed and reindexing preserves this structure,
see~\cite[Def. 1.8.2]{Jacobs1999-CLTT}.
To prove this, we require a suitable adaption of the co-Yoneda lemma to the
setting of fibrations.
\begin{lemma}[Fibred co-Yoneda]
  \label{lem:fibred-co-yoneda}
  Let $p \from \TCat \to \BCat$ be a cloven fibration, and suppose
  $H \from \op{\Cat{I}} \to \TCat$ and $U \from \op{\Cat{I}} \to \BCat$
  are functors, such that $p \comp H = U$.
  Then
  \begin{equation*}
    H \cong
    \int^{i \in \Cat{I}} \sum_{v \in \Hom{\Cat{I}}{-}{i}} \coprodN_{U(v)} H(i).
  \end{equation*}
\end{lemma}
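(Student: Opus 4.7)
The plan is to establish the isomorphism fibrewise at each $j \in \Cat{I}$ and then obtain naturality in $j$ as a consequence. Writing $R(j) \colonequals \int^{i \in \Cat{I}} \sum_{v \in \Hom{\Cat{I}}{j}{i}} \coprodN_{U(v)} H(i)$, one first checks that the construction really places $R(j)$ in the fibre $\TCat_{U(j)}$: $H(i)$ lies over $U(i)$, so $\coprodN_{U(v)} H(i)$ lies over $U(j)$ for each $v \from j \to i$, the $\Hom$-indexed coproduct stays in that fibre, and the coend is computed there. Hence it suffices to produce a vertical isomorphism $R(j) \cong H(j)$.

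Testing against an arbitrary $X \in \TCat_{U(j)}$, I would use the universal properties of the coend and of the coproduct to identify vertical maps $R(j) \to X$ with dinatural families of vertical morphisms $\alpha_v \from \coprodN_{U(v)} H(i) \to X$, one per pair $(i, v \from j \to i)$. Applying the adjunction $\coprodN_{U(v)} \dashv U(v)^*$ and then the universal property of a Cartesian lifting of $U(v)$, each such $\alpha_v$ corresponds to a morphism $\bar{\alpha}_v \from H(i) \to X$ in $\TCat$ lying over $U(v) \from U(i) \to U(j)$. Under this dictionary, the coend's dinaturality condition turns into the expected one at the total-category level, and the problem reduces to the shape of the classical density theorem.

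It then remains to show that such dinatural families are in natural bijection with vertical maps $f \from H(j) \to X$. In one direction, given $f$, set $\bar{\alpha}_v \colonequals f \comp H(v)$; this lies over $U(v)$ because $H(v)$ does (using $p \comp H = U$), and it is dinatural by functoriality of $H$. In the other, take $f \colonequals \bar{\alpha}_{\id_j}$, which is vertical since $U(\id_j) = \id_{U(j)}$; dinaturality applied at $\id_j$ and an arbitrary $v \from j \to i$ then forces $\bar{\alpha}_v = f \comp H(v)$, so the two assignments are mutually inverse. The main obstacle is the bookkeeping in the reduction step: keeping variances, Cartesian factorisations, and the $\coprodN_{U(v)} \dashv U(v)^*$ correspondences aligned so that the total-category notion of ``dinatural family of maps over $U(v)$'' actually matches the coend data. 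Once this is set up, naturality of the isomorphism in $j$ follows from the naturality of the displayed bijection in $X$ and $j$ by the Yoneda lemma.
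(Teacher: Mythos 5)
The paper states this lemma without proof, so there is no in-text argument to compare yours against; judged on its own, your proposal is the standard representability proof of a density/co-Yoneda statement and it is essentially correct. Writing $R(j)$ for the right-hand side at $j$, the chain of identifications you describe does work: vertical maps $R(j) \to X$ in $\TCat_{U(j)}$ unfold, via the coend and the $\Hom$-set-indexed coproduct, into families $\alpha_v \from \coprodN_{U(v)} H(i) \to X$; the adjunction $\coprodN_{U(v)} \dashv \reidx{U(v)}$ followed by the universal property of the Cartesian lifting of $U(v)$ turns each $\alpha_v$ into $\bar{\alpha}_v \from H(i) \to X$ over $U(v)$; extranaturality becomes $\bar{\alpha}_{w \comp v} = \bar{\alpha}_v \comp H(w)$ for $w \from i \to i'$, and evaluating at $v = \id_j$ gives the bijection with vertical maps $H(j) \to X$, whence the result by Yoneda. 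The one step that genuinely needs to be written out rather than waved at is the translation of the coend's extranaturality through the two adjunction steps: this rests on the pseudofunctoriality of the opcartesian liftings (the canonical isomorphisms $\coprodN_{U(v)} \comp \coprodN_{U(w)} \cong \coprodN_{U(w \comp v)}$ and their coherence), which hold in a cloven fibration with the relevant coproducts but are exactly where variance errors hide; you correctly identify this as the main obstacle. One further remark: your argument in fact establishes that $H(j)$, equipped with the cowedge built from the maps $H(v)$, \emph{is} the coend, rather than merely being isomorphic to a pre-existing one. That stronger reading is the one actually needed in the application in \iThmRef{fibred-diagrams-CCC}, where the author explicitly does not want to assume that the coproducts $\coprodN_{U(v)}$ or the coend exist as objects, so it would be worth phrasing your conclusion in that form.
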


\begin{theorem}
  \label{thm:fibred-diagrams-CCC}
  Let $\Cat{I}$ be a small category and $p \from \TCat \to \BCat$ a cloven
  fibration that has fibred finite limits, fibred exponents and for every object
  $i \in \Cat{I}$ fibred products of the size of
  the coslice category $\coslice{i}{\Cat{I}}$.
  Under these conditions, $\ICat{p} \from \ICat{\TCat} \to \ICat{\BCat}$
  is again a fibred CCC.
  The exponential object of $F, G \in \ICat{\TCat}_{U}$ is given by
  \begin{equation*}
    \parens*{G^F}(i) =
    \int_{v \from i \to j} \bparens{\reidx{U(v)}G(j)}^{\reidx{U(v)}F(j)}.
  \end{equation*}
\end{theorem}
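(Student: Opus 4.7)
The plan is to mimic \secRef{prop:diagrams-CCC} one fibre at a time, exploiting the fibred hypotheses to keep everything vertical over $U$, and finally check reindexing stability separately. Fix $U \from \Cat{I} \to \BCat$ and $F, G \in \bparens{\ICat{\TCat}}_U$. The category $\ICat{\TCat}$ is a functor category, so natural-transformation-wise pointwise finite products in $\TCat$ restrict to finite products in $\bparens{\ICat{\TCat}}_U$ (using that $p$ has fibred finite limits and that reindexing preserves them). Hence $G^F \times F$ will make sense once $G^F$ is constructed.

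The construction of $G^F$ proceeds pointwise inside the fibre $\TCat_{U(i)}$. For $v \from i \to j$, both $\reidx{U(v)}F(j)$ and $\reidx{U(v)}G(j)$ lie in $\TCat_{U(i)}$, so fibred exponents give the object $\bparens{\reidx{U(v)}G(j)}^{\reidx{U(v)}F(j)}$ there. These assemble into a functor $S_i \from \op{(\coslice{i}{\Cat{I}})} \times \coslice{i}{\Cat{I}} \to \TCat_{U(i)}$, whose end exists because $\TCat_{U(i)}$ has fibred equalisers and, by hypothesis, products of the size of $\coslice{i}{\Cat{I}}$; the equaliser diagram is the fibrewise analogue of the one in \secRef{prop:diagrams-CCC}. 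Functoriality of $i \mapsto (G^F)(i)$ in $\Cat{I}$ is induced, as there, by the pre-composition functor $\coslice{u}{\Cat{I}} \from \coslice{i'}{\Cat{I}} \to \coslice{i}{\Cat{I}}$ for $u \from i \to i'$, combined with Cartesian lifts along $U(u)$ to land the end-component in the right fibre; this is exactly what makes $G^F$ lie over $U$.

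For the universal property, I would proceed as in Shulman's folklore argument but fibrewise. The evaluation $\ev \from G^F \times F \to G$ is obtained by projecting the end at the component $v = \id_i$ and applying the counit of the fibrewise exponential adjunction in $\TCat_{U(i)}$. Given a vertical $\phi \from H \times F \to G$, each $\phi_j$ is vertical over $U(j)$, and for every $v \from i \to j$ the factorisation of $H(v)$ through the Cartesian lift $\reidx{U(v)} H(j) \to H(j)$ together with reindexing of $\phi_j$ along $U(v)$ (using that reindexing preserves the product) yields a map $H(i) \times \reidx{U(v)}F(j) \to \reidx{U(v)}G(j)$ in $\TCat_{U(i)}$. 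Transposing via fibred exponents and pairing over $v$ via the end universal property produces a unique vertical $\hat{\phi}_i \from H(i) \to (G^F)(i)$; dinaturality in $v$ follows from naturality of $\phi$ plus uniqueness of Cartesian factorisations.

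The last task is to show that for a morphism $g \from U' \to U$ in $\ICat{\BCat}$, the chosen reindexing $\reidx{g}$ on $\ICat{\TCat}$ (which acts pointwise through the cleavage of $p$) satisfies $\reidx{g}(G^F) \cong (\reidx{g}G)^{\reidx{g}F}$; this reduces, at each $i$, to the fact that reindexing in $p$ commutes with fibred exponents and with the products/equalisers used to build the end, so the isomorphism is inherited from the defining formula. The principal obstacle I expect is the third paragraph: verifying naturality of $i \mapsto \hat{\phi}_i$ and the dinaturality required by the end, because each step involves swapping a reindexing with a Cartesian lift and with the product, and one must be careful that all resulting comparison isomorphisms are the canonical ones from the cleavage rather than ad hoc choices.
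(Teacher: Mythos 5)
Your proof is correct, but it takes a genuinely different route from the paper's. You construct the exponential's universal property directly: build the end fibrewise, define evaluation by projecting at $v = \id_i$, and for each vertical $\phi \from H \times F \to G$ assemble the transpose $\hat{\phi}_i$ by hand from the Cartesian factorisation of $H(v)$, the reindexed $\reidx{U(v)}\phi_j$, and the end's universal property. The paper instead never touches evaluation or transposition explicitly: it establishes the adjunction by a single chain of natural isomorphisms of hom-sets, $\Hom{\ICat{\TCat}_U}{H}{G^F} \cong \dotsb \cong \Hom{\ICat{\TCat}_U}{H \times F}{G}$, passing through the opposite adjoints $\coprodN_{U(v)} \dashv \reidx{U(v)}$, the Frobenius property of a fibred CCC, and a fibred co-Yoneda lemma (\iLemRef{fibred-co-yoneda}) that collapses $\int^{i} \sum_{v} \coprodN_{U(v)} H(i)$ back to $H(j)$. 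What your approach buys: it stays entirely within the structure actually assumed (fibred limits, products, exponents), avoiding the paper's formal manipulation of coproducts $\coprodN_{U(v)}$ whose existence is not among the hypotheses and is only justified a posteriori by the co-Yoneda collapse; and you explicitly verify that reindexing along morphisms of $\ICat{\BCat}$ preserves the exponential, which the paper's proof leaves implicit. What the paper's approach buys: naturality in $H$ is automatic from the calculus of ends and hom-functors, whereas you must grind through the dinaturality and cleavage-coherence checks you rightly flag as the delicate point. Both are legitimate proofs of the same statement.
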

\begin{proof}
  The size of the involved limits to compute the end are given in the same
  way as in \iPropRef{diagrams-CCC}.
  Note that the end is equivalently given by an end followed by a product:
  \begin{equation*}
    \int_{v \from i \to j} \bparens{\reidx{U(v)}G(j)}^{\reidx{U(v)}F(j)}
    \cong \int_{j \in \Cat{I}} \prod_{v \from i \to j}
    \bparens{\reidx{U(v)}G(j)}^{\reidx{U(v)}F(j)}
  \end{equation*}
  To show that the given exponential is right-adjoint to the product of
  functors, we consider for $H \in \ICat{\TCat}_U$ the following
  chain of natural isomorphisms.
  \begin{align*}
    \Hom{\ICat{\TCat}_U}{H}{G^F}
    & \cong \int_{i \in \Cat{I}} \Hom{\TCat_{U(i)}}{H(i)}{\parens*{G^F}(i)} \\
    & \cong \int_{i \in \Cat{I}}
    \Hom{\TCat_{U(i)}}{
      H(i)
    }{
      \int_{j \in \Cat{I}} \prod_{v \from i \to j}
      \bparens{\reidx{U(v)}G(j)}^{\reidx{U(v)}F(j)}
    } \\
    & \cong \int_{i \in \Cat{I}} \int_{j \in \Cat{I}} \prod_{v \from i \to j}
    \Hom{\TCat_{U(i)}}{
      H(i)
    }{
      \bparens{\reidx{U(v)}G(j)}^{\reidx{U(v)}F(j)}
    } \\
    & \cong \int_{i \in \Cat{I}} \int_{j \in \Cat{I}} \prod_{v \from i \to j}
    \Hom{\TCat_{U(i)}}{
      H(i) \times \parens*{\reidx{U(v)}F(j)}
    }{
      \reidx{U(v)}G(j)
    }
    \displaybreak[0] \\
    & \cong \int_{i \in \Cat{I}} \int_{j \in \Cat{I}} \prod_{v \from i \to j}
    \Hom{\TCat_{U(i)}}{
      \coprodN_{U(v)} \parens*{H(i) \times \parens*{\reidx{U(v)}F(j)}}
    }{
      G(j)
    } \\
    & \cong \int_{i \in \Cat{I}} \int_{j \in \Cat{I}} \prod_{v \from i \to j}
    \Hom{\TCat_{U(i)}}{
      \parens*{\coprodN_{U(v)} H(i)} \times F(j)
    }{
      G(j)
    }
    \tag{*} \label{frobenius}
    \displaybreak[0] \\
    & \cong \int_{i \in \Cat{I}} \int_{j \in \Cat{I}} \prod_{v \from i \to j}
    \Hom{\TCat_{U(i)}}{
      \coprodN_{U(v)} H(i)
    }{
      G(j)^{F(j)}
    }
    \displaybreak[0] \\
    & \cong \int_{j \in \Cat{I}}
    \Hom{\TCat_{U(i)}}{
      \int^{i \in \Cat{I}} \sum_{v \from i \to j} \coprodN_{U(v)} H(i)
    }{
      G(j)^{F(j)}
    } \\
    & \cong \int_{j \in \Cat{I}}
    \Hom{\TCat_{U(i)}}{
      H(j)
    }{
      G(j)^{F(j)}
    }
    \tag{**} \label{step:fibred-co-yoneda}
    \displaybreak[0] \\
    & \cong \int_{j \in \Cat{I}}
    \Hom{\TCat_{U(i)}}{
      H(j) \times F(j)
    }{
      G(j)
    }
    \displaybreak[0] \\
    & \cong
    \Hom{\ICat{\TCat}_{U}}{
      H \times F
    }{
      G
    }
  \end{align*}
  Note that coproducts in $p$ fulfil the Frobenius property in the
  step~(\ref{frobenius}) because $p$ is a fibred CCC,
  see~\cite[Lem. 1.9.11]{Jacobs1999-CLTT}.
  Moreover, we do not need to assume the existence of coproducts along morphisms
  of $\BCat$ or further colimits explicitly, since $H(j)$ is
  isomorphic to $\int^{i \in \Cat{I}} \coprodN_{v \from i \to j} \coprodN_{U(v)} H(i)$
  by the fibred co-Yoneda lemma that we used in the
  step~(\ref{step:fibred-co-yoneda}).
  \qedhere
\end{proof}

\section{Descending Chains in Categories}
\label{sec:sequences}

In this section, we extend the development
in~\cite{Birkedal12:GuardedDomainTheory} to more general categories.
Besides giving us some intuition for the later modality, we also obtain results
that we can reuse in later sections of this paper.

Let $\omega$ be the poset of finite ordinals, i.e., $\omega = \set{0,1, \dotsc}$
with their usual order.
Since $\omega$ can be seen as a category, we can use its dual category
$\op{\omega}$ as index category, thereby obtaining a functor
$\IntHom{\op{\omega}}{-} \from \CatC \to \CatC$ as in the last section.
We will denote this functor in the following by
\begin{equation}
  \chLift{(-)} = \IntHom{\op{\omega}}{-}.
\end{equation}
The \emph{category of descending chains in $\Cat{C}$} is then the presheaf
category $\CC$, the objects of which we denote by $\sigma, \tau, \dotsc$
More explicitly, $\sigma \in \CC$ assigns as a functor
$\sigma \from \op{\omega} \to \CC$ to each $n \in \N$ an object
$\sigma_n \in \Cat{C}$ and to each pair of natural numbers with $m \leq n$
a morphism $\sigma(m \leq n) \from \sigma_n \to \sigma_m$ in $\Cat{C}$.

\begin{assumption}
Throughout this section, we assume that $\Cat{C}$ is a category with
a terminal object $\T$, finite limits and is Cartesian closed.
\end{assumption}
In particular, we get by \iPropRef{diagrams-CCC} that $\CC$ is also
Cartesian closed as follows.
Let $\finOrd{n}$ be the poset of all numbers less or equal to $n$.
Observe now for $n \in \N$ that
$\coslice{n}{\op{\omega}} = \op{(\slice{\omega}{n})} = \op{\finOrd{n}}$.
Hence, $\coslice{n}{\op{\omega}}$ is finite and, as assumed, we only need
finite limits in $\Cat{C}$ to obtain Cartesian-closure of $\CC$ from
\iPropRef{diagrams-CCC}.

Let us now introduce the later modality, which is the central construction
that underlies the recursive proofs that we develop in this paper.
\begin{definition}
  \label{def:later}
  The \emph{later modality} on $\CC$ is the functor $\later \from \CC \to \CC$
  given on objects by
  \begin{equation*}
    \begin{aligned}
      (\later \sigma)_0 & = \T
      \\
      (\later \sigma)_{n+1} & = \sigma_n
    \end{aligned}
    \qquad
    (\later \sigma)(m \leq n) =
    \begin{cases}
      ! \from \sigma_n \to \T, & m = 0 \text{ or } n = 0 \\
      \sigma(m' \leq n'), & m = m'+1, n = n'+1
    \end{cases}
  \end{equation*}
\end{definition}

\begin{theorem}
  \label{thm:later-next}
  The map $\later$ given in \iDefRef{later} on objects is a functor
  $\CC \to \CC$.
  Moreover, $\later$ has a left adjoint and thereby preserves limits.
  Finally, there is a natural transformation
  \begin{equation*}
    \nextOp \from \Id \natTo \later,
  \end{equation*}
  given by $\nextOp_{\sigma,0} = \, !_{\sigma_0} \from \sigma_0 \to \T$ and
  $\, \nextOp_{\sigma,n+1} = \sigma(n \leq n+1)$.
\end{theorem}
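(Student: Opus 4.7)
The proof breaks naturally into three independent parts, each of which amounts to careful bookkeeping with the index shift built into $\later$.

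\paragraph{Functoriality.}
First I would extend $\later$ to morphisms. Given $f \from \sigma \natTo \tau$ in $\CC$, define $(\later f)_0 = \id_\T$ and $(\later f)_{n+1} = f_n$. To see this is a morphism in $\CC$, I would check that the required squares with the connecting maps $(\later \sigma)(m \leq n)$ commute; this splits into the trivial case where $m = 0$ (since the codomain is $\T$) and the case $m = m'+1, n = n'+1$, where the square reduces to naturality of $f$ at $m' \leq n'$. Functoriality of $\later$ on composites and identities is then immediate from the pointwise definition.

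\paragraph{Left adjoint.}
The intuitive dual of padding with $\T$ on the left is truncation on the left. I would therefore define the ``earlier'' functor $\earlier \from \CC \to \CC$ by $(\earlier \sigma)_n = \sigma_{n+1}$ and $(\earlier \sigma)(m \leq n) = \sigma(m+1 \leq n+1)$, and extend it to morphisms by shifting components up by one. To exhibit $\earlier \dashv \later$, I would set up a natural bijection
\begin{equation*}
  \Hom{\CC}{\earlier \sigma}{\tau} \;\cong\; \Hom{\CC}{\sigma}{\later \tau}
\end{equation*}
sending a family $g_n \from \sigma_{n+1} \to \tau_n$ to the family $h_0 = \, !_{\sigma_0}$, $h_{n+1} = g_n$. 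Compatibility of the $h_n$ with the connecting morphisms of $\later \tau$ is trivial when $m = 0$ and reduces, when $m, n \geq 1$, precisely to the compatibility required of $g$; the inverse map drops the $n = 0$ component. Naturality in $\sigma$ and $\tau$ is then routine. Preservation of limits then follows from the standard fact that right adjoints preserve limits.

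\paragraph{The unit-like transformation $\nextOp$.}
Finally I would verify that the formulas $\nextOp_{\sigma,0} = \, !_{\sigma_0}$ and $\nextOp_{\sigma,n+1} = \sigma(n \leq n+1)$ define a morphism $\sigma \to \later \sigma$ in $\CC$, by checking that $(\later \sigma)(m \leq n) \comp \nextOp_{\sigma,n} = \nextOp_{\sigma,m} \comp \sigma(m \leq n)$; after unfolding the definition of $\later$, both sides reduce to $\sigma(m-1 \leq n)$ in the nontrivial case. Naturality of $\nextOp$ in $\sigma$ collapses, at index $n+1$, to the naturality square $f_n \comp \sigma(n \leq n+1) = \tau(n \leq n+1) \comp f_{n+1}$ for any $f \from \sigma \to \tau$, and is trivial at index $0$.

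\paragraph{Expected obstacle.}
None of the steps is mathematically deep; the only place that requires attention is keeping track of the index shift in the adjunction, in particular checking that the constraint coming from $m = 0$ in morphisms into $\later \tau$ is always vacuous and therefore does not constrain the corresponding morphism out of $\earlier \sigma$. Once this is handled, the rest of the proof is direct verification.
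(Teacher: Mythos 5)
Your proof is correct and follows essentially the same route as the paper: the left adjoint you call $\mathord{\blacktriangleleft}$ is exactly the paper's shift functor $(\sigma)_n \mapsto \sigma_{n+1}$, and the adjunction is established via the same bijection (with the $n=0$ component forced by finality of $\T$), with functoriality and naturality of the next-map likewise reduced to uniqueness of maps into $\T$ plus the chain's connecting morphisms. You simply spell out the index bookkeeping that the paper leaves implicit.
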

\begin{proof}
  Functoriality is given by uniqueness of maps into the final object $\T$.
  The left adjoint to $\later$ is given by $\prev$ with
  $\prev(\sigma)_n = \sigma_{n+1}$.
  \begin{equation*}
    \alwaysDoubleLine
    \AxiomC{$\prev \sigma \to \tau$}
    \UnaryInfC{$\all{n} \sigma_{n+1} \to \tau_n$}
    \UnaryInfC{$\sigma_0 \to \T \text{ and } \all{n} \sigma_{n+1} \to \tau_n$}
    \UnaryInfC{$\sigma \to \later \tau$}
    \DisplayProof
  \end{equation*}
  Finally, naturality of $\nextOp$ is given again by uniqueness of maps
  into final objects and by functoriality of chains.
  \qedhere
\end{proof}

Since $\later$ preserves in particular binary products, we obtain the following.
\begin{lemma}
  \label{lem:later-exp-distr}
  For all $\sigma, \tau \in \CC$ there is a morphism
  $\later(\sigma^\tau) \to \later \sigma^{\later \tau}$.
\end{lemma}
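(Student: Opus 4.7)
The plan is to exploit two facts from \iThmRef{later-next}: that $\later$ has a left adjoint $\prev$, and that consequently $\later$ preserves all limits, in particular binary products. The target morphism $\later(\sigma^\tau) \to \later\sigma^{\later\tau}$ will be obtained as the exponential transpose of a suitable map $\later(\sigma^\tau) \times \later\tau \to \later\sigma$ in the Cartesian closed category $\CC$.

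First I would construct the canonical isomorphism
\[
  \later(\sigma^\tau) \times \later\tau \;\cong\; \later\bparens{\sigma^\tau \times \tau},
\]
which exists because $\later$ preserves binary products. Next, applying the functor $\later$ to the evaluation morphism $\ev_{\tau,\sigma} \from \sigma^\tau \times \tau \to \sigma$ of the CCC structure on $\CC$ yields a morphism $\later(\ev_{\tau,\sigma}) \from \later(\sigma^\tau \times \tau) \to \later\sigma$. Composing these two gives
\[
  \later(\sigma^\tau) \times \later\tau \;\longrightarrow\; \later\sigma,
\]
whose transpose under the adjunction $(-) \times \later\tau \dashv (-)^{\later\tau}$ in $\CC$ is the desired morphism $\later(\sigma^\tau) \to \later\sigma^{\later\tau}$.

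There is no real obstacle here: the construction is the standard ``applicative functor'' pattern, and all ingredients are already available. The only point that deserves verification is that the binary product preservation from \iThmRef{later-next} really gives a natural isomorphism (not merely a morphism) compatible with the CCC structure, but this follows from the fact that $\later$ is a right adjoint and the product is the categorical product in $\CC$, whose preservation is preservation up to the canonical comparison map being invertible. Once this is in hand, the transpose is entirely formal.
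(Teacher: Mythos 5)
Your proof is correct and matches the paper's intended argument: the paper gives no separate proof, but introduces the lemma with the sentence ``Since $\later$ preserves in particular binary products, we obtain the following,'' which is precisely the route you take (invert the canonical product-comparison map, apply $\later$ to the evaluation morphism, and transpose under $(-)\times\later\tau \dashv (-)^{\later\tau}$). Your added remark that invertibility of the comparison map follows from $\later$ being a right adjoint is exactly the content of the cited theorem, so nothing is missing.
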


One the central properties of the later modality is that it allows us to
construct fixed points of certain maps in $\CC$, which are called
contractive.
\begin{definition}
  A map $f \from \tau \times \sigma \to \sigma$ in $\CC$ is called
  $g$-\emph{contractive} if $g$ is a map
  $g \from \tau \times \later \sigma \to \sigma$ with
  $f = g \comp (\id \times \nextOp_{\sigma})$.
  We call $s \from \tau \to \sigma$ a \emph{fixed point} or
  \emph{solution} for $f$, if the following diagram commutes.
  \begin{equation*}
    \begin{tikzcd}[column sep=large]
      \tau \rar{s}
      \dar[swap]{\pair{\id, s}}
      & \sigma \\
      \tau \times \sigma
      \urar[swap]{f}
    \end{tikzcd}
  \end{equation*}

\end{definition}

We can now show that there is a generic operator in $\CC$ that allows
us to construct fixed points.
\begin{theorem}
  \label{thm:lob}
  For every $\sigma \in \CC$ there is a unique morphism, dinatural in $\sigma$,
  \begin{equation*}
    \lob_\sigma \from \sigma^{\later \sigma} \to \sigma,
  \end{equation*}
  such that for all $g$-contractive maps $f$ the
  map $\lob_\sigma \comp \abstr{g}$ is a solution for $f$.
  Dinaturality means thereby that for all $h \from \sigma \to \tau$ the
  diagram below commutes.
  \begin{equation*}
    \begin{tikzcd}[row sep = tiny]
      & \tau^{\later \tau} \rar{\lob_\tau}
      & \tau \\
      \sigma^{\later \tau}
      \arrow{ur}{h^{\id}} \arrow{dr}[swap]{\id^{\later h}} \\
      & \sigma^{\later \sigma} \rar{\lob_\sigma}
      & \sigma \arrow{uu}[swap]{h}
    \end{tikzcd}
  \end{equation*}
\end{theorem}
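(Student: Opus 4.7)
The plan is to construct $\lob_\sigma$ explicitly by induction on the level $n \in \omega$, using the end formula for the exponential from \iPropRef{diagrams-CCC}, and then verify the fixed-point property, uniqueness and dinaturality in turn. Since $\coslice{n}{\op{\omega}} = \op{\finOrd{n}}$, the exponential is given by
\[
  (\sigma^{\later\sigma})_n \cong \int_{m \leq n} \sigma_m^{(\later\sigma)_m}.
\]
At level $n = 0$ the end collapses to $\sigma_0^{\T} \cong \sigma_0$, because $(\later\sigma)_0 = \T$, so I would take $\lob_{\sigma,0}$ to be the induced projection composed with this canonical isomorphism. At level $n+1$ the end contains, as one of its components, $\sigma_{n+1}^{\sigma_n}$, thanks to $(\later\sigma)_{n+1} = \sigma_n$. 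Combining this component with the already defined $\lob_{\sigma,n}$ gives the recipe: send an element $h$ of $(\sigma^{\later\sigma})_{n+1}$ first through $\lob_{\sigma,n}$ to land in $\sigma_n$, and then apply the component $h_{n+1}$ regarded as a map $\sigma_n \to \sigma_{n+1}$. Compatibility of $\lob_\sigma$ with the transition maps of $\sigma$ (that is, naturality in $\op{\omega}$) follows by induction, using uniqueness of maps into $\T$ at level~$0$ and the extranaturality constraints of the end in the step.

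For the fixed-point property, I would trace $\lob_\sigma \comp \abstr{g}$ through the above definition for a $g$-contractive map $f = g \comp (\id \times \nextOp_\sigma)$ with $g \from \tau \times \later\sigma \to \sigma$. This yields the iterative description $s_0 = g_0 \comp \pair{\id}{!}$ and $s_{n+1} = g_{n+1} \comp \pair{\id}{s_n}$, where $!$ is the unique map into $\T$. A direct computation then verifies $g \comp \pair{\id}{\nextOp_\sigma \comp s} = s$ level by level, which rearranges to $f \comp \pair{\id}{s} = s$ as required; the naturality of $s$ as a morphism of chains is built into the inductive construction.

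Uniqueness would be reduced to the special case $\tau = \sigma^{\later\sigma}$ and $g$ equal to the evaluation $\mathrm{ev} \from \sigma^{\later\sigma} \times \later\sigma \to \sigma$, for which $\abstr{g} = \id$. Any morphism $\ell$ with the stated property must then itself be a solution for the contractive map $\mathrm{ev} \comp (\id \times \nextOp_\sigma)$, and a level-wise induction (trivial at $n=0$ by the terminal object, using the same recurrence as above at $n+1$) forces $\ell = \lob_\sigma$. Dinaturality follows by another induction on the level: the base case is automatic at $\T$, and the inductive step uses naturality of $h \from \sigma \to \tau$ to commute $h_{n+1}$ past the inner evaluation with $\later h$, matching $\lob_\sigma$ precomposed with $h^{\id}$ against $\lob_\tau$ precomposed with $\id^{\later h}$. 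The main obstacle I anticipate is bookkeeping: one must keep the end-based description of $\sigma^{\later\sigma}$ aligned with the chain-structure restrictions throughout the induction, so that the projections out of the end agree with the transition maps $\sigma(n \leq n+1)$ that appear both in the naturality check for $\lob_\sigma$ itself and in the dinaturality square.
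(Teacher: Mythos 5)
Your construction is essentially the paper's own: the paper defines $\lob_{\sigma,0} = \ev_0 \comp \pair{\id,!}$ and $\lob_{\sigma,n+1} = \ev_{n+1} \comp (\id \times \lob_{\sigma,n}) \comp \pair{\id, \sigma^{\later\sigma}(n \leq n+1)}$, which is exactly your ``restrict, apply $\lob_n$, then evaluate the top component'' recipe phrased via the counit rather than via generalized elements of the end, and it likewise reduces uniqueness to the case $g = \ev$ using the adjunction. Only minor bookkeeping differs (e.g.\ your $\pair{\id}{s_n}$ should read $\pair{\id}{s_n \comp \tau(n\leq n+1)}$), so the proposal is correct and takes the same route.
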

\begin{proof}
  We define $\lob_n \from (\sigma^{\later \sigma})_n \to \sigma_n$ by iteration
  on $n$.
  For $0$, we put
  \begin{equation*}
    \lob_0 \colonequals
    (\sigma^{\later \sigma})_0
    \xrightarrow{\pair{\id, !}} (\sigma^{\later \sigma})_0 \times \T
    = (\sigma^{\later \sigma})_0 \times (\later \sigma)_0
    \xrightarrow{\ev_0} \sigma_0,
  \end{equation*}
  where $\ev$ is the counit of
  $(-) \times \later \sigma \dashv (-)^{\later \sigma}$.
  In the iteration step, we define
  \begin{equation*}
    \begin{tikzcd}[%
      ,row sep = 0ex
      ,/tikz/column 1/.append style={anchor=base east}
      ,/tikz/column 2/.append style={anchor=base west}
      ,column sep=3cm
      ]%
      \lob_{n+1} \colonequals
      (\sigma^{\later \sigma})_{n+1}
      \rar{
        \pair*{\id,\sigma^{\later \sigma}(n \leq n+1)}}
      & (\sigma^{\later \sigma})_{n+1} \times (\sigma^{\later \sigma})_{n} \\
      \phantom{(\sigma^{\later \sigma})_{n+1}} \rar{\id \times \lob_n}
      & (\sigma^{\later \sigma})_{n+1} \times \sigma_n \\
      \phantom{(\sigma^{\later \sigma})_{n+1}} \rar{\ev_{n+1}}
      & \sigma_{n+1},
    \end{tikzcd}
  \end{equation*}
  where $\sigma^{\later \sigma}(n \leq n+1)$ is the functorial action of
  $\sigma^{\later \sigma}$ (\iPropRef{diagrams-CCC}).
  To show that $\lob$ is the unique map making $\lob_\sigma \comp \abstr{g}$
  as solution one first shows that $\lob$ is uniquely fulfilling the equation
  $\lob_n = (\ev \comp (\id \times \nextOp) \comp \pair{\id, \lob})_n$
  by induction on $n$ and doing a small diagram chase.
  Uniqueness of solutions is then given the properties
  of the adjunction $(-) \times \later \sigma \dashv (-)^{\later \sigma}$.
  \qedhere


\end{proof}

\begin{remark}
  Birkedal et al.~\cite{Birkedal12:GuardedDomainTheory} give some closure
  properties of contractive maps.
  These can be extended to our more general setting, but as we will not need
  them here, we will not state and prove them.
  \qedDef
\end{remark}
\section{Descending Chains in Fibrations}
\label{sec:fib-seq}

Now that we have developed some understanding of how descending chains work
in general categories, we will essentially lift the results from
\iSecRef{sequences} to fibrations.
This will allow us to construct from a first-order logic, given by a fibration,
a new logic of descending chains that admits the same logical structure as
the given fibration and admits recursive proofs for coinductive predicates.

Throughout this section, we assume the following.
\begin{assumption}
  Let $p \from \TCat \to \BCat$ be a cloven fibration, such that,
  \begin{itemize}
  \item $\TCat$ has fibred final objects,
  \item fibred finite limits in $\TCat$ exist, 
    and
  \item $\TCat$ is a fibred CCC. 
  \end{itemize}
\end{assumption}

Similarly to \iSecRef{sequences}, we obtain by \iLemRef{seq-fibred-functor}
that the functor $\pC \from \TCCat \to \BCCat$ given by post-composition
is a fibration.
By the above assumptions, we then get by \iThmRef{fibred-diagrams-CCC}
that $\pC$ is a fibred CCC.
We obtain another fibred CCC by change-of-base along the diagonal functor
$\delta \from \BCat \to \BCCat$ that sends an object $I \in \BCat$ to the
constant chain $K_I \from \op{\omega} \to \BCat$,
see~\cite[Ex.~1.8.8]{Jacobs1999-CLTT} and \cite{Streicher2018:FibredCats}:
\begin{equation*}
  \begin{tikzcd}
    \BCat \times_{\BCCat} \TCCat
    \rar{} \dar[swap]{q} \pullback
    & \TCCat \dar{\pC} \\
    \BCat \rar{\delta}
    & \BCCat
  \end{tikzcd}
\end{equation*}
Note that for $I \in \BCCat$, the fibre of $q$ above $I$ is isomorphic
to $\TCCat_{K_I}$.
Hence, we will simplify notation in the following and just refer to
$\TCCat_{K_I}$ as $\TCCat_{I}$.
Furthermore, we note the following result, which might seem trivial at first,
but it allows us to apply, for instance, \iLemRef{indexed-lifting} to functors
between fibres of a given fibration.
\begin{lemma}
  \label{lem:fibres-are-chain-cats}
  $\TCCat_{K_I} = \chCat{\TCat_I}$, which we will denote by $\TCCat_I$.
\end{lemma}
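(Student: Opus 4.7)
The plan is to prove the equality by unfolding definitions on both sides and observing that they coincide exactly, rather than constructing a non-trivial isomorphism.

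First, I would unpack the left-hand side. An object of $\TCCat_{K_I}$ is, by definition, a functor $F \from \op{\omega} \to \TCat$ such that $\pC(F) = p \comp F$ equals the constant chain $K_I$. Unfolding this condition pointwise yields $p(F_n) = I$ for every $n$ and $p(F(m \leq n)) = K_I(m \leq n) = \id_I$ for every $m \leq n$. These are exactly the conditions that say $F$ factors through the inclusion $\TCat_I \hookrightarrow \TCat$, i.e.\ $F$ is a functor $\op{\omega} \to \TCat_I$, which is the same as an object of $\chCat{\TCat_I}$.

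Next, I would verify the morphisms match. A morphism in the fibre $\TCCat_{K_I}$ from $F$ to $G$ is a natural transformation $\alpha \from F \natTo G$ in $\TCCat$ whose image under $\pC$ is the identity on $K_I$. Componentwise, this forces each $\alpha_n \from F_n \to G_n$ to satisfy $p(\alpha_n) = \id_I$, so each $\alpha_n$ is a vertical morphism in the fibre $\TCat_I$. The naturality squares of $\alpha$ live entirely in $\TCat_I$ because both the components $\alpha_n$ and the arrows $F(m \leq n)$, $G(m \leq n)$ are vertical. Hence $\alpha$ is exactly a natural transformation in $\chCat{\TCat_I}$, and the correspondence preserves composition and identities on the nose.

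I expect no substantive obstacle: the statement is effectively a bookkeeping observation, and the point of stating it as a lemma is to fix notation for later use. The only mild care required is to notice that the diagonal $\delta$ sends $I$ to the chain $K_I$ with identity structure maps, so that fibrationally being above $K_I$ degenerates to being vertical over $I$ at each stage. Once this is observed, the two descriptions are literally the same data, which justifies writing the equality $\TCCat_{K_I} = \chCat{\TCat_I}$ rather than merely an isomorphism.
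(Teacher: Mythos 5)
Your proof is correct and is exactly the definitional unfolding that the paper itself treats as immediate: the lemma is stated without proof, precisely because being above the constant chain $K_I$ (with identity structure maps) forces every component and every structure morphism to be vertical over $I$, so the two categories coincide on the nose. Nothing is missing.
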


Having worked only abstractly so far, it is about time that we give a few
examples.
There are four kinds of examples that we shall use here to illustrate different
aspects of the theory: predicates over sets, quantitative predicates,
syntactic first-order logic, and set families that model dependent types.
We begin with the simplest example, namely that of predicates.
Despite its simplicity, it is already a quite useful because it allows us
to reason about predicates and relations for arbitrary coalgebras in $\SetC$.
\begin{example}[Predicates]
  \label{ex:pred-CCC}
  A standard fibration is the fibration $\Pred \to \SetC$ of predicates,
  where an object in $\Pred$ is a predicate $(P \subseteq X)$ over a set
  $X$.
  Each fibre $\Pred_X$ has a final object $\T_X = (X \subseteq X)$ and
  the fibred binary products are given by intersection.
  Moreover, exponents also exist in $\Pred_X$ by defining
  \begin{equation*}
    Q^P = \setDef{x \in X}{x \in P \implies x \in Q}.
  \end{equation*}
  The fibration $\chCat{\Pred}$ consists then of descending chains of
  predicates.
  In particular, if $\sigma \in \chCat{\Pred}_X$, then $\sigma$ is
  a chain with $\sigma_0 \supseteq \sigma_1 \supseteq \dotsm$.
  Note now that each fibre $\Pred_X$ is a poset, hence equalisers are trivial
  and (finite) limits are just given as (finite) products.
  Hence, \iThmRef{fibred-diagrams-CCC} applies and we obtain that
  $\chCat{\Pred}$ is a fibred CCC.
  Since equalisers are trivial, it is easy to see that the exponential for
  $\sigma, \tau \in \chCat{\Pred}_X$ can be defined as follows.
  \begin{equation*}
    \parens*{\tau^{\sigma}}_n =
    \bigcap\nolimits_{m \leq n} \tau_n^{\sigma_n} \subseteq X
  \end{equation*}
  We end this example by noting that fibred constructions, like the above
  products and exponents, are preserved by a change-of-base,
  see~\cite[Lem. 1.8.4]{Jacobs1999-CLTT}.
  This induces thus exponents in the fibration of (binary) relations
  $\Rel \to \SetC$ and the associated fibration
  $\chCat{\Rel} \to \chCat{\SetC}$.
  Hence, one can also apply the results in this paper to reason, for example,
  about bisimilarity in coalgebras.
  \qedDef
\end{example}

Often, one is not just interested in merely logical predicates, but rather
wants to analyse quantitative aspects of system.
This is, for instance, particularly relevant for probabilistic or weighted
automata.
The following example extends the predicate fibration from \iExRef{pred-CCC}
to quantitative predicates, which gives a convenient setting to reason
about quantitative properties.
\begin{example}[Quantitative Predicates]
  We define the category of quantitative predicates $\qPred$ as follows.
  \begin{equation*}
    \qPred =
    \CatDescr{
      \text{pairs } (X, \delta)
      \text{ with } X \in \SetC \text{ and } \delta \from X \to \I
    }{
      f \from (X, \delta) \to (Y, \gamma)
      \text{ if } f \from X \to Y \text{ in } \SetC
      \text{ and } \delta \leq \gamma \comp f
    }
  \end{equation*}
  It is easy to show that the first projection $\qPred \to \SetC$ gives rise
  to a cloven fibration, for which the reindexing functors are given for
  $u \from X \to Y$ by
  \begin{equation*}
    \reidx{u}(Y, \gamma) = \bparens{X, \lam \gamma(u(x))}.
  \end{equation*}
  For brevity, let us refer to an object $(X, \delta)$ in $\qPred_X$ just by its
  underlying valuation $\delta$.
  One readily checks that $\qPred$ is a fibred CCC by defining the products
  and exponents by
  \begin{equation*}
    (\delta \times \gamma)(x) = \min \set{\delta(x), \gamma(x)}
    \quad \text{ and } \quad
    (\delta \Rightarrow \gamma)(x) =
    \begin{cases}
      1, & \delta(x) \leq \gamma(x) \\
      \gamma(x), & \text{otherwise}
    \end{cases}.
  \end{equation*}
  Fibred final objects are given by the constantly $1$ valuation.
  Again, each fibre $\qPred_X$ is a poset, hence finitely complete and
  so $\chCat{\qPred}$ is a fibred CCC.
  \qedDef
\end{example}

The original motivation for the work presented in this paper was to abstract
away from the details that are involved in constructing a syntactic logic
for a certain coinductive relation in~\cite{Basold17:Phd}.
In~\cite{Basold17:Phd}, the author developed a first-order logic that features
the later modality to reason about program equivalences.
This logic was given in a very pedestrian way, since the syntax, proof
system, model and proof system was constructed from scratch.
The proofs often involved then something along the lines of
``true because this is an index-wise interpretation of intuitionistic logic''.
Thus, the aim of the following example is to show that we can just take any
first-order logic $L$ and extend it to a logic $\chCat{L}$, in which formulas
are descending chains of formulas in $L$.
Crucially, the logic $\chCat{L}$ will have the later modality as a new
formula construction, and it will get new proof rules that correspond
to the morphism $\nextOp$, the functoriality of $\later$ and construction of
fixed points through the $\lob$ morphism.
We will also see below that quantifier can be lifted to formulas in $\chCat{L}$,
and that the later modality interacts well with conjunction,
implication and quantification,
cf.~\iThmRef{later-next} and~\iLemRef{later-exp-distr}.
After this long-winded motivation, let us now come to the actual example.
\begin{example}[Syntactic Logic]
  \label{ex:syntactic-logic-CCC}
  Suppose we are given a typed calculus, for example the simply typed
  $\lambda$-calculus, and a first-order logic, in which the variables range
  over the types of the calculus.
  More precisely, let $\Gamma$ be a context with
  $\Gamma = x_1 : A_1, \dotsc, x_n : A_n$, where the $x_i$ are variables and
  the $A_i$ are types of the calculus.
  We write then $\Gamma \Vdash t : A$ if $t$ is a term of type $A$ in context
  $\Gamma$, $\validForm{\varphi}$ if $\varphi$ is formula with variables
  in $\Gamma$, and $\Gamma \vdash \varphi$ if $\varphi$ is provable in the
  given logic.
  Let us assume that the logic also features a truth formula $\top$,
  conjunction $\conj$ and implication $\to$, which are subject to the usual
  proof rules of intuitionistic logic.
  This allows us to form a fibration as follows.
  First, we define $\mathcal{C}$ to be the category that has context $\Gamma$
  as objects and tuples $t$ of terms as morphisms $\Delta \to \Gamma$
  with $\Delta \Vdash t_i : A_i$.
  Next, we let $L$ be the category that has pairs $\sPair{\Gamma, \varphi}$
  with $\validForm{\varphi}$ as objects, and a morphism
  $(\Delta, \psi) \to (\Gamma, \varphi)$ in $L$ is given by a morphism
  $t \from \Delta \to \Gamma$ in $\mathcal{C}$ if
  $\Delta \vdash \psi \to \varphi[t]$, where $\varphi[t]$ denotes the
  substitution of $t$ in the formula $\varphi$.
  The functor $p \from L \to \mathcal{C}$ that maps $(\Gamma, \varphi)$ to
  $\Gamma$ is then easily seen to be a cloven (even split) fibration,
  see for example~\cite{Jacobs1999-CLTT}.

  We note that $p$ has fibred finite products and exponents, as the logic
  that we started with has $\top$, conjunction and implication with the
  necessary proof rules.
  Moreover, since each fibre is a pre-ordered set, equalisers are again
  trivial.
  Hence, $\pC$ is also a fibred CCC.
  Explicitly, for chains $\varphi, \psi$ of formulas in $\pC_A$ above the
  constant chain $K_A$ for a type $A$, the exponent
  $\psi \Rightarrow \varphi$ in $\pC$ is given by
  \begin{equation*}
    (\psi \Rightarrow \varphi)_n =
    \bigwedge_{m \leq n} \psi_m \to \varphi_n,
  \end{equation*}
  where $\bigwedge$ is a shorthand for a finite number of conjunctions.
  \qedDef
\end{example}

We lift now the constructions from the last \iSecRef{sequences} to the
fibres of $\TCCat$.
\begin{theorem}
  \label{thm:fibred-later}
  For each $c \in \BCCat$, there is a fibred functor
  $\laterFib{c} \from \TCCat_c \to \TCCat_c$
  given by
  \begin{align*}
    (\laterFib{c} \sigma)_0 & = \T_{c_0} \\
    (\laterFib{c} \sigma)_{n+1} & = \reidx{c(n \leq n+1)}(\sigma_n).
  \end{align*}
  Moreover, $\laterFib{c}$ preserves fibred finite products and if $p$ is a
  bifibration then $\laterFib{c}$ preserves all fibred limits.
  Finally,  there is a natural transformation
  $\nextFib{c} \from \Id \natTo \laterFib{c}$, given by
  $\nextFib{c}_{\sigma,0} = ! \from \sigma_0 \to \T_{c_0}$
  and $\nextFib{c}_{\sigma,n+1} = \sigma(n \leq n+1)$.
\end{theorem}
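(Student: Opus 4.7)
The plan is to mirror the proof of \iThmRef{later-next}, carrying the constructions level by level through the fibres and checking that each ingredient behaves compatibly with the chain structure $c \in \BCCat$ and with reindexing. The essential change from the non-fibred case is that the identity $(\later\sigma)_{n+1}=\sigma_n$ must be replaced by the reindexed object $\reidx{c(n\leq n+1)}(\sigma_n)$, so that $(\laterFib{c}\sigma)_{n+1}$ lives in the correct fibre $\TCat_{c_{n+1}}$.

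First I would check that $\laterFib{c}$ is well-defined as a functor on $\TCCat_c$. Given a morphism $\alpha\from\sigma\to\tau$ in $\TCCat_c$ (i.e.\ a family of vertical maps $\alpha_n\from\sigma_n\to\tau_n$ over $c_n$ commuting with the chain maps), one sets $(\laterFib{c}\alpha)_0$ to be the unique map $\T_{c_0}\to\T_{c_0}$ and $(\laterFib{c}\alpha)_{n+1}=\reidx{c(n\leq n+1)}(\alpha_n)$. Functoriality of each reindexing functor together with uniqueness of maps into the fibred terminal object gives functoriality of $\laterFib{c}$, and naturality of the transition maps is the statement that reindexing preserves composition and identities applied to $\sigma$ and $\tau$, combined with the commuting squares that $\alpha$ satisfies.

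Second I would prove that the family $\set{\laterFib{c}}_{c\in\BCCat}$ commutes with reindexing, i.e.\ defines a fibred functor. For a morphism $u\from c\to d$ in $\BCCat$, reindexing along $u$ is given fibrewise by the chosen cleavage of $p$, so one only has to compare $(\laterFib{c}\reidx{u}\sigma)_n$ with $(\reidx{u}\laterFib{d}\sigma)_n$. At degree $0$ both are the fibred terminal object over $c_0$ (reindexing preserves terminal objects), and at degree $n+1$ both equal $\reidx{c(n\leq n+1)}\reidx{u_n}(\sigma_n)\cong\reidx{u_{n+1}}\reidx{d(n\leq n+1)}(\sigma_n)$ by pseudofunctoriality of the cleavage applied to the commuting square $u_{n+1}\comp c(n\leq n+1)=d(n\leq n+1)\comp u_n$. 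Preservation of fibred finite products is then immediate: reindexing functors preserve fibred products, and the degree $0$ component is the terminal object, which is trivially preserved.

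Third, to obtain preservation of all fibred limits in the bifibration case, I would exhibit a left adjoint $\prevFib{c}\dashv\laterFib{c}$ by setting $(\prevFib{c}\sigma)_n=\coprodN_{c(n\leq n+1)}\sigma_{n+1}$, using the coproduct functors along reindexing that exist because $p$ is a bifibration. The adjunction is established by the same deduction-style calculation as in \iThmRef{later-next}, the only new ingredient being the point-wise adjunction $\coprodN_{c(n\leq n+1)}\dashv\reidx{c(n\leq n+1)}$. Finally, the transformation $\nextFib{c}$ is defined degreewise as in the theorem, with naturality following from uniqueness of maps into fibred terminal objects at degree $0$ and from functoriality of reindexing applied to the naturality squares of $\sigma$ at degree $n+1$.

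The main obstacle I expect is bookkeeping rather than any deep difficulty: all the individual checks reduce to pseudofunctoriality of the cleavage and preservation properties of reindexing, but there are several coherences to verify simultaneously (well-definedness across fibres, compatibility with reindexing, and the adjunction triangles in the bifibrational part), and one must be careful that the choice of cleavage does not introduce spurious non-canonical isomorphisms when assembling the left adjoint.
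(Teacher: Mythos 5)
Your proposal is correct and follows essentially the same route as the paper: define $\laterFib{c}$ level-wise via reindexing along $c(n\leq n+1)$, deduce fibredness and preservation of finite products from the cleavage, obtain preservation of all fibred limits from the level-wise left adjoint $(\prevFib{c}\sigma)_n=\coprodN_{c(n\leq n+1)}\sigma_{n+1}$ in the bifibration case, and get naturality of $\nextFib{c}$ from finality and functoriality. The only point you gloss over that the paper makes explicit is the construction of the transition maps $(\laterFib{c}\sigma)(m+1\leq n+1)\from\reidx{c(n\leq n+1)}\sigma_n\to\reidx{c(m\leq m+1)}\sigma_m$, which are obtained as mediating morphisms via the universal property of the cartesian lift over $c(m\leq m+1)$ rather than by simply reindexing $\sigma(m\leq n)$; this is routine but worth recording.
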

\begin{proof}
  We define $\later \sigma$ on morphisms by case distinction
  as follows.
  \begin{align*}
    (\later \sigma)(0 \leq 0) & = \id \from \T_{c_0} \to \T_{c_0} \\
    (\later \sigma)(0 \leq n+1) & =
    \reidx{c(n \leq n+1)}(\sigma_n)
    \xrightarrow{\cartL{c(n \leq n+1)}{\sigma_n}} \sigma_n
    \xrightarrow{!} \T_{c_0} \\
    (\later \sigma)(m+1 \leq n+1) & =
    \text{mediating morphism in the following diagram}
  \end{align*}
  \begin{equation*}
    \begin{tikzcd}
      \reidx{c(n \leq n+1)}(\sigma_n)
      \rar{} \dar[dashed]{(\later \sigma)(m+1 \leq n+1)}
      & \sigma_n \dar{\sigma(m \leq n)} \\
      \reidx{c(m \leq m+1)}(\sigma_m) \rar{}
      & \sigma_m
    \end{tikzcd}
    \xmapsto{p}
    \begin{tikzcd}
      c_{n+1} \rar{} \dar{}
      & c_n \dar{} \\
      c_{m+1} \rar{}
      & c_m
    \end{tikzcd}
  \end{equation*}
  Note that the right diagram commutes by functoriality of $c$.
  It is clear that $\pC(\later \sigma) = c$ by the above definition,
  and so $\later \sigma$ is an object in $\TCCat_c$.
  Defining $\later$ on morphisms is a straightforward, as it is to check
  functoriality.
  That $\laterFib{c}$ is preserved by reindexing, that is, for $f \from c \to d$
  in $\BCat$ one has $\reidx{f} \comp \laterFib{c} \cong \later_d \comp \reidx{f}$,
  is given by the properties of a cloven fibration.
  That $\laterFib{c}$ preserves products is a simple calculation.
  The preservation of all fibred limits if $p$ is a bifibration is given by the
  fact that $\laterFib{c}$ then has a fibred left adjoint $\prev_c$ given by
  $(\prev_c \sigma)_n = \sum_{c(n \leq n+1)} \sigma_{n+1}$.
  Finally, naturality of $\nextFib{c}$ is given as before.
  \qedhere
\end{proof}

Let us briefly stop to discuss the perspective on the later modality that
arises canonically from the development in the previous section.
\begin{remark}
  We note that we can instantiate all the results from \iSecRef{sequences} to
  $\TCat$ as follows.
  Suppose that $\BCat$ is a finitely complete CCC and $\TCat$ also has a global
  finite limits and exponents, such that the corresponding adjunctions are
  given by maps of fibrations.
  This means, for instance, that for all $X \in \TCat$ there are
  adjunctions $(-) \times X \dashv (-)^X$ and $(-) \times pX \dashv (-)^{pX}$
  on $\TCat$ and $\BCat$, respectively, such that
  $\parens*{(-) \times pX, (-) \times X}$ and
  $\parens*{(-)^{pX}, (-)^{X}}$ are maps of fibrations.
  This structure gives us that $\pC \from \TCCat \to \BCCat$ has global
  exponents.
  Moreover, one can show that $(\later, \later) \from \pC \to \pC$ is a map of
  fibrations and that the next- and Löb-operations are preserved by
  $\pC$: $\pC(\nextOp) = \nextOp$ and $\pC(\lob) = \lob$.
  However, we will not make use of these results here, as their use is vastly
  more complicated than the fibred approach.
  For example, the predicate fibration has global exponents given by
  \begin{equation*}
    (P \subseteq X)^{(Q \subseteq Y)} =
    \setDef{f \from Y \to X}{\all{y \in X} f(y) \in P}
    \subseteq X^Y.
  \end{equation*}
  The problem is that we would need to show that solutions of certain morphism
  obtained through using $\lob$ are vertical, as we often want to prove
  the set inclusion of predicates.
  Since formulating and proving such conditions seem to very hard and since
  they do not even seem to be useful, we will refrain from pursuing the global
  Cartesian structure on $\pC$ further here.
  \qedDef
\end{remark}

As we mentioned above, if $p$ has a global final object, then we can
instantiate \iSecRef{sequences} to the fibration $p$.
This gives us a map of fibration $(\later, \later)$ on $\pC$.
Since the fibred final objects $\T_I$ in $\TCat_I$ are related to
the final object $\T$ of $\TCat$ by $\T_I \cong \reidx{!_I}(\T)$,
we obtain that the global and local later modalities are intrinsically related
\begin{lemma}
  For all $\sigma \in \TCCat_I$, we have
  $\laterFib{c} \sigma \cong \reidx{\nextOp_c}(\later \sigma)$.
\end{lemma}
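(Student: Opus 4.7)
The plan is to exhibit the isomorphism componentwise, using the characterisation of fibred terminals from the preceding remark, and then check that the transition maps of the two chains agree by the universal property of Cartesian lifts.

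First I would unfold the definitions to see that both sides live in the same fibre $\TCCat_c$: the left-hand side does so by \iThmRef{fibred-later}, while for the right-hand side, since $\pC$ is a fibration and $\nextOp_c \from c \to \later c$ lies in $\BCCat$, the reindexing $\reidx{\nextOp_c}(\later \sigma)$ lies above $c$ (using that $\pC(\later \sigma) = \later c$, which follows from the construction of $\later$ in \iSecRef{sequences} and the fact that $\pC$ preserves final objects).

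Next I would compute each level of $\reidx{\nextOp_c}(\later \sigma)$. Since reindexing along a morphism in $\BCCat$ is, up to isomorphism, given fibrewise by reindexing along each component, at level $0$ one has
\begin{equation*}
  \bparens{\reidx{\nextOp_c}(\later \sigma)}_0
  \cong \reidx{\nextOp_{c,0}}(\T) = \reidx{!_{c_0}}(\T) \cong \T_{c_0}
  = (\laterFib{c} \sigma)_0,
\end{equation*}
using the remark preceding the lemma that $\T_{c_0} \cong \reidx{!_{c_0}}(\T)$, together with the definition $\nextOp_{c,0} = \,!_{c_0}$ from \iThmRef{later-next}. At level $n+1$, since $\nextOp_{c,n+1} = c(n \leq n+1)$ and $(\later \sigma)_{n+1} = \sigma_n$, one has
\begin{equation*}
  \bparens{\reidx{\nextOp_c}(\later \sigma)}_{n+1}
  \cong \reidx{c(n \leq n+1)}(\sigma_n) = (\laterFib{c} \sigma)_{n+1}.
\end{equation*}

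Finally I would check that these componentwise isomorphisms assemble into a chain morphism, i.e.\ are compatible with the restriction maps $(m \leq n)$. For $m = 0 = n$ and $m = 0, n = n'+1$ the unique-map-into-terminal argument that already appears in the proof of \iThmRef{fibred-later} applies verbatim on both sides. For $m+1 \leq n+1$, both restriction maps are defined as the mediating morphism induced by the universal property of the Cartesian lift $\cartL{c(m \leq m+1)}{\sigma_m}$ over the square whose top edge is $\sigma(m \leq n)$; for the right-hand side this uses additionally the naturality square $\nextOp_{c,m+1} \comp c(m+1 \leq n+1) = (\later c)(m+1 \leq n+1) \comp \nextOp_{c,n+1}$, which unfolds to the very commutation used in the definition of $\laterFib{c}$. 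Uniqueness of mediating morphisms then forces the two squares to agree.

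I do not foresee a genuine obstacle; the only care required is bookkeeping between the global terminal $\T \in \TCat$ and the fibred terminals $\T_{c_n} \in \TCat_{c_n}$, and between reindexing along a natural transformation in $\BCCat$ and its componentwise description. Both are standard consequences of clovenness, so the entire proof reduces to a componentwise verification and a diagram chase invoking the universal property of Cartesian lifts.
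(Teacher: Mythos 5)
Your proof takes essentially the same route as the paper's: a level-wise computation identifying $(\laterFib{c} \sigma)_0 = \T_{c_0} \cong \reidx{!}(\T) = \reidx{\nextOp_{c,0}}((\later \sigma)_0)$ and $(\laterFib{c} \sigma)_{n+1} = \reidx{c(n \leq n+1)}(\sigma_n) = \reidx{\nextOp_{c,n+1}}((\later \sigma)_{n+1})$. Your additional check that these component isomorphisms commute with the restriction maps is a detail the paper leaves implicit, but it does not constitute a different argument.
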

\begin{longproof}
  For each $n \in \N$, we have
  \begin{align*}
    (\laterFib{c} \sigma)_n
    & =
    \begin{cases}
      \T_{c_0}, & n = 0 \\
      \reidx{c(k \leq k+1)} \sigma_k, & n = k + 1
    \end{cases}
    \\
    & \cong
    \begin{cases}
      \reidx{!_I}(\T), & n = 0 \\
      \reidx{c(k \leq k+1)} \sigma_k, & n = k + 1
    \end{cases}
    \\
    & = \reidx{\nextOp_{c,n}}((\later \sigma)_n)
    \\
    & = (\reidx{\nextOp_c}(\later \sigma))_n
  \end{align*}
  Thus $\laterFib{c}(\sigma) \cong \reidx{\nextOp_c}(\later \sigma)$
  as claimed.
  \qedhere
\end{longproof}


Due to \iLemRef{fibres-are-chain-cats}, we can apply many construction
easily point-wise to chains with constant index.
For instance, we can lift products and coproducts in the following sense.
\begin{theorem}
  \label{thm:coproduct-lift}
  If for $f \from I \to J$ in $\BCat$ the coproduct
  $\coprod_f \from \TCat_I \to \TCat_J$ along $f$ exists, then
  the coproduct $\coprod_{\chLift{f}} \from \TCCat_I \to \TCCat_J$
  along $\chLift{f}$ is given by $\chLift{\coprod_f}$.
  Similarly, the product $\prod_{\chLift{f}}$ along $\chLift{f}$ is given
  by $\chLift{\prod_f}$.
\end{theorem}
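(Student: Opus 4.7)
The strategy is to reduce both claims to \iLemRef{indexed-lifting} by first identifying the reindexing functor $\reidx{\chLift{f}}$ in the fibration $\pC$ with the pointwise lift $\chLift{\reidx{f}}$. Once this identification is made, the two halves of the theorem follow immediately from the adjunctions $\coprod_f \dashv \reidx{f}$ and $\reidx{f} \dashv \prod_f$ in the original fibration $p$, together with uniqueness of adjoints.

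The first step is to unfold the construction of $\pC$, which is the pointwise fibration obtained from $p$ as recalled in the proof of \iLemRef{seq-fibred-functor}. For any natural transformation $\alpha \from c \natTo d$ in $\BCCat$ and any $\sigma \in \TCCat_d$, a Cartesian lift of $\alpha$ at $\sigma$ is given componentwise by the chosen Cartesian lifts $\cartL{\alpha_n}{\sigma_n}$ in $p$, with the requisite functoriality supplied by naturality of $\alpha$ and the universal property of Cartesian morphisms. Specialising to $\alpha = \chLift{f}$, whose components are all equal to $f$, yields $(\reidx{\chLift{f}}\sigma)_n = \reidx{f}(\sigma_n)$, with the transition maps being the unique fill-ins. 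This is precisely the definition of $\chLift{\reidx{f}}(\sigma) = \reidx{f} \comp \sigma$, so we obtain the equality of functors $\reidx{\chLift{f}} = \chLift{\reidx{f}} \from \TCCat_J \to \TCCat_I$.

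With this identification in hand, the coproduct claim follows at once. The existence of $\coprod_f$ along $f$ means precisely that $\coprod_f \dashv \reidx{f}$ as functors $\TCat_I \to \TCat_J$ and back. Applying \iLemRef{indexed-lifting} to this adjunction produces an adjunction $\chLift{\coprod_f} \dashv \chLift{\reidx{f}}$, which by the identification above is the same as $\chLift{\coprod_f} \dashv \reidx{\chLift{f}}$. Hence $\chLift{\coprod_f}$ is left adjoint to reindexing along $\chLift{f}$, which, by uniqueness of adjoints up to isomorphism, means $\coprod_{\chLift{f}} \cong \chLift{\coprod_f}$. The product claim is entirely dual: starting from $\reidx{f} \dashv \prod_f$, the same lemma gives $\chLift{\reidx{f}} \dashv \chLift{\prod_f}$, i.e.\ $\reidx{\chLift{f}} \dashv \chLift{\prod_f}$, so $\prod_{\chLift{f}} \cong \chLift{\prod_f}$.

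The only non-routine step is the identification $\reidx{\chLift{f}} = \chLift{\reidx{f}}$, which ultimately rests on the fact that the cleaving of $\pC$ is chosen pointwise from the cleaving of $p$; since the morphism $\chLift{f}$ is itself constructed pointwise as the constant natural transformation $K_f$, the computation collapses without any coherence subtleties. Beyond that, everything is a direct application of \iLemRef{indexed-lifting} and the uniqueness of adjoints.
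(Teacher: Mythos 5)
Your proof is correct and follows essentially the same route as the paper: both reduce the claim to \iLemRef{indexed-lifting} applied to the adjunctions $\coprod_f \dashv \reidx{f}$ and $\reidx{f} \dashv \prod_f$, using the identification of the fibres of $\pC$ over constant chains (the paper cites \iLemRef{fibres-are-chain-cats} for this). You additionally spell out the identification $\reidx{\chLift{f}} = \chLift{\reidx{f}}$ via the pointwise cleaving, a step the paper leaves implicit; this is a welcome clarification rather than a divergence.
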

\begin{proof}
  By \iLemRef{fibres-are-chain-cats} and \iLemRef{indexed-lifting}, we obtain
  that an adjunction $\coprod \dashv f$ lifts to an adjunction
  $\chLift{\coprod_f} \dashv \chLift{f}$.
  Hence, the coproduct along $\chLift{f}$ is given by $\chLift{\coprod_f}$.
  \qedhere
\end{proof}

\begin{example}
  Both $\Pred$ and $\Fam{\SetC}$ are well known to have products and coproducts
  along any function in $\SetC$.
  We note that also $\qPred$ has products along all functions $f \from X \to Y$,
  given by
  \begin{equation*}
    \prodN_f(\delta \from X \to \I)(y)
    = \inf \setDef{\delta(x)}{x \in X, f(x) = y}.
  \end{equation*}
  Finally, in a syntactic logic, as in \iExRef{syntactic-logic-CCC},
  one has that $L \to \mathcal{C}$ obtains products and coproducts along
  projections $(\Gamma, x : A) \to \Gamma$ from universal and existential
  quantification over $A$, respectively.
  To have arbitrary (co)products, one additionally needs an equality relation
  in the logic, cf.~\cite{Jacobs1999-CLTT}.
  By \iThmRef{coproduct-lift}, all these products and coproducts lift to
  the corresponding fibration of descending chains.
  \qedDef
\end{example}

Let us denote for $I \in \BCat$ the later modality $\laterFib{K_I}$ on
$\TCCat_I$ by $\laterFib{I}$.
We can then establish the following essential properties about the interaction
of the later modalities and (co)products, which are analogue
to those in~\cite[cf.~Thm.~2.7]{Birkedal12:GuardedDomainTheory}.
This theorem allows one to distribute in proofs quantifiers over the later
modality.
\begin{theorem}
  The following holds for fibred products and coproducts in $\pC$.
  \begin{itemize}
  \item There is an isomorphism
    $\laterFib{J} \comp \prod_{\chLift{f}}
    \cong \prod_{\chLift{f}} \comp \laterFib{I}$.
  \item There is a natural transformation
    $\iota \from \coprod_{\chLift{f}} \comp \laterFib{I}
    \natTo \laterFib{J} \comp \coprod_{\chLift{f}}$.
    Moreover, if $f$ is inhabited, that is, has a section $g \from J \to I$,
    then $\iota$ has a section $\iota^g$.
  \end{itemize}
\end{theorem}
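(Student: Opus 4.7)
The plan is to exploit the fact that, by \iThmRef{coproduct-lift}, both $\prod_{\chLift{f}}$ and $\coprod_{\chLift{f}}$ act on a chain $\tau \in \TCCat_I$ by pointwise application of $\prod_f$ and $\coprod_f$ respectively. Since the later modalities $\laterFib{I}$ and $\laterFib{J}$ also act essentially pointwise---the chains $K_I$ and $K_J$ are constant, so all the reindexings $\reidx{c(n \leq n+1)}$ appearing in \iThmRef{fibred-later} are trivial---the only genuine change happens at level $0$, where $\laterFib{}$ inserts a fibred terminal. Both statements therefore reduce to level-by-level computations inside single fibres of $p$.

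For the product isomorphism, I would compute both sides at each level $n$. At level $0$ the two values are $\T_J$ and $\prod_f \T_I$; these agree up to canonical isomorphism because $\prod_f$, as a right adjoint to $\reidx{f}$, preserves the fibred terminal object. At every level $n+1$, both composites yield $\prod_f \tau_n$, so the identity suffices. Compatibility with the chain transition maps is immediate from functoriality of $\prod_f$ together with the definition of the transition maps of $\laterFib{}$, and the assembled components form the required natural isomorphism.

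For the coproduct transformation, the conceptual reason only a natural transformation can be expected is that $\coprod_f$ is a left adjoint and need not preserve the terminal. Define $\iota_\tau \from \coprod_{\chLift{f}} \laterFib{I} \tau \to \laterFib{J} \coprod_{\chLift{f}} \tau$ componentwise: at level $0$ take the unique map $\coprod_f \T_I \to \T_J$; at every level $n+1$ take the identity on $\coprod_f \tau_n$. Commutativity with the chain transition maps holds because the only non-identity transition maps of the two sides go into level $0$, and commutativity there is forced by uniqueness of maps into $\T_J$. Naturality in $\tau$ then follows from functoriality of $\coprod_f$.

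Finally, given a section $g \from J \to I$ of $f$, I would construct $\iota^g_0 \from \T_J \to \coprod_f \T_I$ from the unit $\eta \from \T_I \to \reidx{f} \coprod_f \T_I$ of the adjunction $\coprod_f \dashv \reidx{f}$: reindexing along $g$ and using $\reidx{g} \comp \reidx{f} \cong \reidx{f \comp g} = \id$, one gets an arrow $\reidx{g} \T_I \to \coprod_f \T_I$, which together with the canonical iso $\T_J \cong \reidx{g} \T_I$ (reindexing preserves the terminal) delivers $\iota^g_0$; at all higher levels set $\iota^g$ to be the identity. The composite $\iota \comp \iota^g$ is then trivially the identity, since at level $0$ it is an endomap of $\T_J$ and hence forced by terminality, while at higher levels it is the identity on the nose. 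The main obstacle I expect is the verification that $\iota^g$ is a morphism of chains, i.e.\ commutes with the transition into level $0$; this will reduce to a short diagram chase that transports the required equality across the adjunction $\coprod_f \dashv \reidx{f}$, and it is precisely here that the section condition $f \comp g = \id_J$ is used to identify $\reidx{f} \iota^g_0$ with $\eta_{\T_I}$ modulo the canonical isomorphism $\reidx{f} \T_J \cong \T_I$.
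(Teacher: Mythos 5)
Your construction coincides with the paper's: both statements are handled levelwise, the only non-trivial component being level $0$, where the isomorphism $\prodN_f \T_I \cong \T_J$ comes from $\prodN_f$ being a right adjoint, $\iota_{\sigma,0}$ is the unique map $\coprodN_f \T_I \to \T_J$, and all higher components are identities. Your $\iota^g_{\sigma,0}$, obtained by reindexing the unit $\eta_{\T_I}$ along $g$, is exactly the arrow the paper writes as $\cartL{f}{\coprodN_f \T_I} \comp \eta \comp \T_g$, and the verification of $\iota \comp \iota^g = \id$ by finality of $\T_J$ is also the paper's argument. Up to that point this is the same proof.

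The problem lies in the one step you attempt beyond what the paper records, namely the check that $\iota^g_\sigma$ is a morphism of chains. The identity you propose to establish there, $\reidx{f}\iota^g_{\sigma,0} = \eta_{\T_I}$ modulo $\reidx{f}\T_J \cong \T_I$, is false in general: in $\Fam{\SetC}$ with $I = \set{a,b}$, $J = \set{\ast}$, $f$ the unique map and $g(\ast) = a$, one has $\coprodN_f \T_I \cong 1+1$, and $\reidx{f}\iota^g_{\sigma,0}$ selects the $a$-summand in both components, whereas $\eta_{\T_I}$ selects the $a$-summand at $a$ and the $b$-summand at $b$. Worse, the naturality square of $\iota^g_\sigma$ from level $k+1$ into level $0$ reads $\iota^g_{\sigma,0} \comp {!} = \coprodN_f(!_{\sigma_k})$; instantiating $\sigma_k = \T_I$ turns this into $\iota^g_{\sigma,0} \comp \iota_{\sigma,0} = \id_{\coprodN_f \T_I}$, which would force $\coprodN_f \T_I \cong \T_J$ and fails in the same example. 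So no diagram chase can close this step: $\iota^g$ is a levelwise right inverse of $\iota$ but not a morphism of $\TCCat_J$ unless the fibres are preorders (as in $\Pred$, $\qPred$ and the syntactic logic, where all fibre diagrams commute) or some further hypothesis on $f$ is imposed. The paper's own proof only verifies $\iota \comp \iota^g = \id$ levelwise and is silent on this point, so what you have flagged as a routine verification is in fact a genuine gap in the statement as given, not a chase you left out.
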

\begin{proof}
  Establishing the sought-after isomorphism and $\iota$ is straightforward.
  The section $\iota^g$ of $\iota$ for a given $g \from J \to I$ is can be
  defined by
  \begin{align*}
    \iota^g_{\sigma,0} & =
    \T_J \xrightarrow{\T_g}
    \T_I \xrightarrow{\eta}
    \reidx{f} \coprodN_f \T_I \xrightarrow{\cartL{f}{\coprodN_f \T_I}}
    \coprodN_f \T_I
    \\
    \iota^g_{\sigma,n+1} & =
    \coprodN_f \sigma_n \xrightarrow{\id} \coprodN_f \sigma_n.
  \end{align*}
  That this is a right-inverse of $\iota$ follows from finality if $\T_J$.
  \qedhere
\end{proof}
\begin{longproof}
  \begin{itemize}
  \item We define
    \begin{equation*}
      \kappa \from
      \laterFib{Y} \comp \prod_{\chLift{f}}
      \cong \prod_{\chLift{f}} \comp \laterFib{X}
      \colon \kappa^{-1}
    \end{equation*}
    by
    \begin{align*}
      & \kappa_{\sigma,0} =
      \T_Y
      \xrightarrow{\eta} \prodN_f \reidx{f} \T_Y
      \xrightarrow{\prod_f !} \prodN_f \T_X
      & & \kappa_{\sigma,n+1} =
      \prodN_f \sigma_n \xrightarrow{\id} \prodN_f \sigma_n
    \end{align*}
    and
    \begin{align*}
      & \kappa^{-1}_{\sigma,0} = \prodN_f \T_X \xrightarrow{!} \T_Y
      & & \kappa^{-1}_{\sigma,n+1} =
      \prodN_f \sigma_n \xrightarrow{\id} \prodN_f \sigma_n
    \end{align*}
    \begin{itemize}
    \item Naturality of $\kappa^{-1}$ is easily checked.
    \item $\kappa^{-1} \comp \kappa = \id$ is trivial because $\T_Y$ is final
    \item For $\kappa \comp \kappa^{-1} = \id$, we check at $0$:
      \begin{align*}
        \kappa_{\sigma,0} \comp \kappa^{-1}_{\sigma,0}
        & = \prodN_f! \comp \eta \comp ! \\
        & = \prodN_f! \comp \prodN_f \reidx{f} ! \comp \eta
        \tag*{$\eta$ natural} \\
        & = \prodN_f (! \comp \reidx{f} !) \comp \eta
        \tag*{$\prodN_f$ functor} \\
        & = \prodN_f (! \comp \varepsilon_{\T_X}) \comp \eta
        \tag*{$\T_X$ final} \\
        & = \prodN_f ! \comp \prodN_f \varepsilon_{\T_X} \comp \eta
        \tag*{$\prodN_f$ functor} \\
        & = \prodN_f ! \comp \id_{\prodN_f \T_X}
        \tag*{$\reidx{f} \dashv \prodN_f$} \\
        & = \prodN_f \id \comp \id_{\prodN_f \T_X}
        \tag*{$\T_X$ final} \\
        & = \prodN_f \id_{\T_X} \comp \id_{\prodN_f \T_X}
        \tag*{$\T_X$ final} \\
        & = \id_{\prodN_f \T_X}
        \tag*{$\prodN_f$ functor}
      \end{align*}
    \end{itemize}
    Thus, $\kappa^{-1}$ is natural and $\kappa$ its point-wise inverse.
    Hence, $\kappa$ is also natural, and they form together an isomorphism.
  \item We define $\iota$ by
    \begin{align*}
      & \iota_{\sigma,0} = \coprodN_f \T_X \xrightarrow{!} \T_Y
      & & \iota_{\sigma,n+1} =
      \coprodN_f \sigma_n \xrightarrow{\id} \coprodN_f \sigma_n,
    \end{align*}
    which is easily seen to be natural like $\kappa^{-1}$ above.
    We define a section $\iota^g$ by
    \begin{align*}
      \iota^g_{\sigma,0} & =
      \T_Y \xrightarrow{\T_g}
      \T_X \xrightarrow{\eta}
      \reidx{f} \coprodN_f \T_X \xrightarrow{\cartL{f}{\coprodN_f \T_X}}
      \coprodN_f \T_X
      \\
      \iota^g_{\sigma,n+1} & =
      \coprodN_f \sigma_n \xrightarrow{\id} \coprodN_f \sigma_n,
    \end{align*}
    which is easily seen to be natural.
    Finally, that $\iota^g$ is a section of $\iota$ follows immediately
    from finality of $\T_Y$.
    \qedhere
  \end{itemize}
\end{longproof}

\begin{remark}
  \label{rem:fibred-products-hard}
  It should be possible to establish in $\TCCat$ fibred products and coproducts
  along general morphisms of $\BCCat$.
  However, this is a much more difficult task, which will use ideas similar
  to those used in \iThmRef{fibred-diagrams-CCC}.
  Intuitively, the products that we established correspond to universal
  quantifiers over fixed sets, while general products would correspond to
  universal quantification over variable sets.
  The difference is analogous to that in Kripke models of (intuitionistic)
  first-order logic:
  Suppose $\mathcal{M} = (W, \leq, U)$ is a model, where $\leq$ is a partial
  order on $W$ and $U$ an interpretation for the quantification domain.
  If $U$ is merely a set, then the satisfaction $\vDash$ relation is defined
  for universal quantification by
  \begin{equation*}
    w, \rho \vDash \all{x} \varphi
    \quad \iff \quad
    \all{u \in U} w, \rho[x \mapsto u] \vDash \varphi.
  \end{equation*}
  However, if $U$ is a family $U \from W \to \SetC$, then the interpretation
  of universal quantification involves a quantification over all successor
  worlds:
  \begin{equation*}
    w, \rho \vDash \all{x} \varphi
    \quad \iff \quad
    \all{w \leq v} \all{u \in U(v)} v, \rho[x \mapsto u] \vDash \varphi.
  \end{equation*}
  This means that if we want to lift products to general chains, then
  the fibred products will involve again a quantification over morphisms
  in the index category, and the product must also be given by an end,
  as we used it in the construction of exponents in
  \iThmRef{fibred-diagrams-CCC}.
  Since this construction is fairly involved and not necessary for our
  current purposes, we will leave such a construction aside for now.
  \qedDef
\end{remark}

We finish this section by lifting also the construction of fixed points
for contractive maps to fibrations.
\begin{theorem}
  \label{thm:lob-fib}
  For every $\sigma \in \TCCat_c$ there is a unique map in $\TCCat_c$,
  dinatural in $\sigma$,
  \begin{equation*}
    \lobFib{c}_\sigma \from \sigma^{\later \sigma} \to \sigma,
  \end{equation*}
  such that for all $g$-contractive maps $f$ the
  map $\lobFib{c}_\sigma \comp \abstr{g}$ is a solution for $f$.
\end{theorem}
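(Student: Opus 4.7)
The plan is to replay the construction of Theorem~\ref{thm:lob} inside the fibre $\TCCat_c$, which by our standing assumption is a Cartesian closed category with finite limits, equipped with the later modality $\laterFib{c}$ of Theorem~\ref{thm:fibred-later} and with an explicit description of exponents given by Theorem~\ref{thm:fibred-diagrams-CCC}. The key point is that, since every piece of structure used in Theorem~\ref{thm:lob} (final object, products, exponents, later, next, evaluation) is available fibrewise, the whole argument relocates to $\TCCat_c$ without conceptual change.

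First, I would define $\lobFib{c}_n \from (\sigma^{\laterFib{c} \sigma})_n \to \sigma_n$ by induction on $n$, following exactly the recipe of Theorem~\ref{thm:lob}. At $n = 0$, I use that $(\laterFib{c} \sigma)_0 = \T_{c_0}$ is final in $\TCat_{c_0}$, so the composite
\begin{equation*}
  (\sigma^{\laterFib{c} \sigma})_0
  \xrightarrow{\pair{\id, !}}
  (\sigma^{\laterFib{c} \sigma})_0 \times \T_{c_0}
  \xrightarrow{\ev_0}
  \sigma_0
\end{equation*}
is well-defined in the fibre $\TCat_{c_0}$, where $\ev$ is the counit of the fibred adjunction $(-) \times \laterFib{c} \sigma \dashv (-)^{\laterFib{c} \sigma}$. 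At $n+1$, I use the functorial action $\sigma^{\laterFib{c} \sigma}(n \leq n+1)$ coming from the exponent construction of Theorem~\ref{thm:fibred-diagrams-CCC}, pair it with the identity, apply $\id \times \lobFib{c}_n$, and compose with $\ev_{n+1}$. The resulting family lives over $c$ and, by a direct diagram chase using naturality of $\ev$ and functoriality of the exponent, assembles into a morphism of chains $\lobFib{c}_\sigma \from \sigma^{\laterFib{c} \sigma} \to \sigma$ in $\TCCat_c$.

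Next, I would show that $\lobFib{c}$ is uniquely characterised by the fixed-point equation $\lobFib{c}_n = (\ev \comp (\id \times \nextFib{c}) \comp \pair{\id, \lobFib{c}})_n$, again by induction on $n$; this is the fibrewise form of the corresponding step in Theorem~\ref{thm:lob}, and uses that $(\laterFib{c} \sigma)_{n+1} = \reidx{c(n \leq n+1)} \sigma_n$ together with $\nextFib{c}_{\sigma,n+1} = \sigma(n \leq n+1)$. From this equation, together with the defining property of $\abstr{g}$ and the counit $\ev$ of the fibred exponential, one deduces that $\lobFib{c}_\sigma \comp \abstr{g}$ solves $f = g \comp (\id \times \nextFib{c}_\sigma)$, and uniqueness of the solution follows from the uniqueness of adjoint transposes in the fibre. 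Dinaturality in $\sigma$ is then a formal consequence of uniqueness: both composites in the dinaturality diagram solve a common $g$-contractive equation obtained by transporting $h$ across the adjunction, hence must agree.

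The main obstacle is not conceptual but bookkeeping: one must check at each step that the morphisms produced are vertical over $c$, and that all the auxiliary morphisms (the functorial action of the exponent, the counit $\ev$, and $\nextFib{c}$) behave compatibly with the reindexings $\reidx{c(n \leq n+1)}$. Because Theorems~\ref{thm:fibred-diagrams-CCC} and~\ref{thm:fibred-later} have already packaged this verticality into the fibred structure, the verifications reduce to the same diagram chases as in Theorem~\ref{thm:lob}, performed pointwise in the fibres $\TCat_{c_n}$ and glued by the reindexings along $c(n \leq n+1)$.
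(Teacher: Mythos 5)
Your overall strategy --- replaying the construction of Theorem~\ref{thm:lob} fibrewise in $\TCCat_c$ --- is the right one and matches the paper, and your base case is exactly the paper's. The gap is in the inductive step, which is precisely the one place where the fibred proof genuinely differs from the plain one. You propose to ``use the functorial action $\sigma^{\laterFib{c}\sigma}(n \leq n+1)$, pair it with the identity, apply $\id \times \lobFib{c}_n$, and compose with $\ev_{n+1}$''. That is the recipe from Theorem~\ref{thm:lob}, but it does not typecheck here: $(\sigma^{\laterFib{c}\sigma})(n \leq n+1)$ is a morphism lying over $c(n \leq n+1)$, with codomain $(\sigma^{\laterFib{c}\sigma})_n$ in the fibre $\TCat_{c_n}$, whereas $\ev_{n+1}$ requires a vertical map into $(\sigma^{\laterFib{c}\sigma})_{n+1} \times (\laterFib{c}\sigma)_{n+1}$ in the fibre $\TCat_{c_{n+1}}$. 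The product $(\sigma^{\laterFib{c}\sigma})_{n+1} \times \sigma_n$ that your recipe implicitly forms mixes objects of two different fibres and is not available as a fibred product.

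The missing ingredient is the morphism the paper calls $\step_n$: since $(\laterFib{c}\sigma)_{n+1} = \reidx{c(n \leq n+1)}\sigma_n$ and $\cartL{c(n \leq n+1)}{\sigma_n}$ is cartesian, the composite $\lobFib{c}_n \comp (\sigma^{\laterFib{c}\sigma})(n \leq n+1)$, which lies over $c(n \leq n+1)$, factors uniquely through that cartesian lifting by a \emph{vertical} morphism $\step_n \from (\sigma^{\laterFib{c}\sigma})_{n+1} \to (\laterFib{c}\sigma)_{n+1}$. One then sets $\lobFib{c}_{n+1} = \ev_{n+1} \comp \pair{\id, \step_n}$, which is genuinely a map in $\TCat_{c_{n+1}}$. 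You gesture at the issue when you say the auxiliary morphisms must ``behave compatibly with the reindexings'', but the factorisation through the cartesian lifting is the actual construction, not a verification to be deferred; without it the inductive step has no well-typed definition. The remainder of your argument (the fixed-point equation, uniqueness via adjoint transposes, dinaturality from uniqueness) then goes through as in Theorem~\ref{thm:lob}, which is also all the paper claims about those parts.
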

\begin{proof}
  We define $\lobFib{c}_n \from (\sigma^{\later \sigma})_n \to \sigma_n$ again
  by iteration on $n$.
  For $0$, we put
  \begin{equation*}
    \lob_0 \colonequals
    (\sigma^{\later \sigma})_0
    \xrightarrow{\pair{\id, !}} (\sigma^{\later \sigma})_0 \times \T_{c_0}
    = (\sigma^{\later \sigma})_0 \times (\later \sigma)_0
    \xrightarrow{\ev_0} \sigma_0,
  \end{equation*}
  where $\ev$ is the counit of
  $(-) \times \later \sigma \dashv (-)^{\later \sigma}$.
  In the iteration step, we first define a morphism $\step_n$ as the mediating
  morphism in the following diagram.
  \begin{equation*}
    \begin{tikzcd}
      \parens*{\sigma^{\laterFib{c} \sigma}}_{n+1}
      \rar{\parens*{\sigma^{\laterFib{c} \sigma}}(n \leq n+1)}
      \dar[dashed]{\step_n}
      & \parens*{\sigma^{\laterFib{c} \sigma}}_{n}
      \dar{\lobFib{c}_n} \\
      \parens*{\laterFib{c} \sigma}_{n+1} = \reidx{c(n \leq n+1)} \sigma_n
      \rar{\cartL{c(n \leq n+1)}{\sigma_n}}
      & \sigma_n
    \end{tikzcd}
    \xmapsto{p}
    \begin{tikzcd}
      c_{n+1} \dar[equal]{} \rar{c(n \leq n+1)}
      & c_n \dar[equal]{} \\
      c_{n+1} \rar{c(n \leq n+1)}
      & c_n
    \end{tikzcd}
  \end{equation*}
  The map $\lobFib{c}_{n+1}$ is then given by
  \begin{equation*}
    \lobFib{c}_{n+1} \colonequals
    (\sigma^{\later \sigma})_{n+1}
    \xrightarrow{\pair{\id, \step_n}}
    (\sigma^{\later \sigma})_{n+1} \times (\laterFib{c} \sigma)_{n+1}
    \xrightarrow{\ev_{n+1}} \sigma_{n+1},
  \end{equation*}
  That $\lobFib{c}$ is vertical, i.e., $\pC(\lobFib{c}) = \id$ is clear
  from the definition.
  The other properties follows like in \iThmRef{lob}.
  \qedhere
\end{proof}

\section{The Final Chain and Up-To Techniques}
\label{sec:fin-chain-up-to}

Having laid the ground work, we come now to the actual objects of interest:
coinductive predicates.
We will proceed again in two steps, in that we first present coinductive
predicates over arbitrary categories and then move to fibrations.
The following captures the usual construction of the final chain.
\begin{definition}
  \label{def:fin-chain}
  Let $\Cat{C}$ be a category with a final object and
  $\Phi \from \Cat{C} \to \Cat{C}$ a functor.
  We define a chain $\ch{\Phi} \in \CC$ by
  \begin{equation*}
    \begin{aligned}
      \ch{\Phi}_0 & = \T \\
      \ch{\Phi}_{n+1} & = \Phi(\ch{\Phi}_n)
    \end{aligned}
    \; \text{ and } \;
    \ch{\Phi}(m \leq n) =
    \begin{cases}
      ! \from \ch{\Phi}_n \to \T, & m = 0 \text{ or } n = 0 \\
      \Phi(\ch{\Phi}(m' \leq n')), & m = m' + 1, n = n' + 1
    \end{cases}
  \end{equation*}
\end{definition}

The following theorem will play a central role in recursive proofs,
as it allows us to unfold $\ch{\Phi}$ and thereby to make progress in a
recursive proof.
Additionally, it tells us that $\ch{\Phi}$ is a fixed point of the
functor $\later \comp \, \chLift{\Phi}$,
cf.~\cite[Thm.~2.14]{Birkedal12:GuardedDomainTheory}.
\begin{theorem}
  \label{thm:step}
  We have that $\ch{\Phi} = \later \bparens{\chLift{\Phi} \ch{\Phi}}$.
\end{theorem}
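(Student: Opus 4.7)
The plan is to verify the equation by unfolding the defining clauses of $\ch{\Phi}$, $\later$ and $\chLift{\Phi}$ on objects and morphisms of $\op{\omega}$ and checking that the two chains agree on the nose. Since $\later \bparens{\chLift{\Phi}\ch{\Phi}}$ is built by three layers, the natural strategy is to compute this composition pointwise and compare it clause-by-clause with \iDefRef{fin-chain}.

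First, I would check object-level equality. At $0$ we have $\parens*{\later \chLift{\Phi}\ch{\Phi}}_0 = \T = \ch{\Phi}_0$ directly from the base clauses of $\later$ (\iDefRef{later}) and of $\ch{\Phi}$. At a successor $n+1$ we compute
\begin{equation*}
  \parens*{\later \chLift{\Phi}\ch{\Phi}}_{n+1}
  = \parens*{\chLift{\Phi}\ch{\Phi}}_n
  = \Phi\parens*{\ch{\Phi}_n}
  = \ch{\Phi}_{n+1},
\end{equation*}
using the successor clause of $\later$, the definition $\chLift{\Phi}(\sigma) = \Phi \comp \sigma$, and the successor clause of $\ch{\Phi}$. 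So the two functors agree on objects of $\op{\omega}$.

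Next, I would verify equality on morphisms $m \leq n$ by the same case split as in the definitions. When $m = 0$ (or $n = 0$), both sides collapse to the unique map into $\T$ by finality, so they coincide. For a morphism $m+1 \leq n+1$, the successor clause of $\later$ gives
\begin{equation*}
  \parens*{\later \chLift{\Phi}\ch{\Phi}}(m+1 \leq n+1)
  = \parens*{\chLift{\Phi}\ch{\Phi}}(m \leq n)
  = \Phi\bparens{\ch{\Phi}(m \leq n)},
\end{equation*}
which matches the successor clause $\ch{\Phi}(m+1 \leq n+1) = \Phi\bparens{\ch{\Phi}(m \leq n)}$ of \iDefRef{fin-chain}. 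Since the two functors are equal on objects and on all generating morphisms of $\op{\omega}$, they are equal as objects of $\CC$.

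I do not expect any real obstacle here: the statement is literally an equality of functors that follows from carefully aligning the two recursive definitions. The only thing to be careful about is the slightly asymmetric case distinction in \iDefRef{later} and \iDefRef{fin-chain} (the clauses for $m = 0$ and for $n = 0$ both land in the terminal object), but once one observes that the $\T$-valued cases match on both sides by uniqueness, the induction goes through without friction.
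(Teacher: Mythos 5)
Your proof is correct and follows essentially the same route as the paper's own argument: both simply unfold the definitions and check, pointwise on objects and on the morphisms of the index category, that the two chains coincide on the nose. The extra care you take with the terminal-object cases for morphisms is a harmless elaboration of what the paper leaves implicit.
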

\begin{longproof}
  For $n \in \N$ we have
  \begin{equation*}
    \ch{\Phi}_n
    =
    \begin{cases}
      \T_X, & n = 0 \\
      \Phi(\ch{\Phi}_k), & n = k+1
    \end{cases}
    = (\later (\chLift{\Phi} \ch{\Phi}))_n
  \end{equation*}
  and similarly
  $\ch{\Phi}(m \leq n) = (\later (\chLift{\Phi} \ch{\Phi}))(m \leq n)$.
  \qedhere
\end{longproof}

Just as important as unfolding $\ch{\Phi}$ is the ability to remove
contexts, use transitivity of relations etc. in a proof.
Such properties can properties can be captured through so-called
compatible up-to techniques~\cite{Bonchi:UpTo-Fib-LICS,Rot2014-EnhCoalgBisim}.
\begin{theorem}
  \label{thm:compat-proof-rules}
  Let $T$ and $\Phi$ be functors $\Cat{C} \to \Cat{C}$.
  If there is a natural transformation $\rho \from T \Phi \natTo \Phi T$,
  then there is a map
  $\ch{\rho} \from \chLift{T}\ch{\Phi} \to \ch{\Phi}$ in $\CC$.
\end{theorem}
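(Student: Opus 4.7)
The plan is to define $\ch{\rho}$ componentwise by induction on $n$, using the obvious pattern forced by the recursive definition of $\ch{\Phi}$. Since $(\chLift{T}\ch{\Phi})_n = T(\ch{\Phi}_n)$, we need morphisms $\ch{\rho}_n \from T(\ch{\Phi}_n) \to \ch{\Phi}_n$ in $\Cat{C}$. For $n = 0$, we take $\ch{\rho}_0 = {!} \from T(\T) \to \T$, the unique map into the final object. For the successor case, $\ch{\Phi}_{n+1} = \Phi(\ch{\Phi}_n)$, so we can compose the distributive law with a functorial application of the previously constructed component:
\begin{equation*}
  \ch{\rho}_{n+1} \colonequals
  T\Phi(\ch{\Phi}_n)
  \xrightarrow{\rho_{\ch{\Phi}_n}}
  \Phi T(\ch{\Phi}_n)
  \xrightarrow{\Phi(\ch{\rho}_n)}
  \Phi(\ch{\Phi}_n).
\end{equation*}

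The remaining task is to verify that this family constitutes a natural transformation $\chLift{T}\ch{\Phi} \natTo \ch{\Phi}$, i.e., that for every $m \leq n$ the square
\begin{equation*}
  \begin{tikzcd}
    T(\ch{\Phi}_n) \rar{\ch{\rho}_n} \dar[swap]{T(\ch{\Phi}(m \leq n))}
    & \ch{\Phi}_n \dar{\ch{\Phi}(m \leq n)} \\
    T(\ch{\Phi}_m) \rar{\ch{\rho}_m}
    & \ch{\Phi}_m
  \end{tikzcd}
\end{equation*}
commutes. I would carry out this check by induction, mirroring the case split in \iDefRef{fin-chain}. When $m = 0$ or $n = 0$ the square commutes trivially by finality of $\T$. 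In the case $m = m'+1$, $n = n'+1$, the two sides unfold to expressions whose outermost operation is $\Phi(-)$ applied to instances already handled by the induction hypothesis, together with one occurrence of $\rho$ whose placement can be swapped using exactly the naturality of $\rho \from T\Phi \natTo \Phi T$ at the map $\ch{\Phi}(m' \leq n')$. A short diagram chase then closes the case.

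There is no real obstacle here: the definition is dictated by the recursive structure of $\ch{\Phi}$, and the only non-trivial ingredient in the naturality check is the naturality of $\rho$ itself. The mild subtlety is purely bookkeeping, namely, making sure that the induction hypothesis is applied to the pair $(m', n')$ before $\rho$ is slid across $T\Phi(\ch{\Phi}(m' \leq n'))$; once the order is right, the equation collapses in a few lines.
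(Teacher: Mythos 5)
Your construction of $\ch{\rho}$ (the unique map to $\T$ at level $0$, and $\Phi(\ch{\rho}_n) \comp \rho_{\ch{\Phi}_n}$ at level $n+1$) and your verification that it is a morphism of chains (induction, with finality of $\T$ for the base cases and naturality of $\rho$ plus functoriality of $\Phi$ for the step) are exactly the paper's proof. No differences to report.
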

\begin{proof}
  We define $\ch{\rho}_n$ by iteration on $n$:
  \begin{align*}
    \ch{\rho}_0 & = T\T \xrightarrow{!} \T \\
    \ch{\rho}_{n+1} & =
    \displaystyle
    T \Phi \ch{\Phi}_{n} \xrightarrow{\textstyle \rho_{\ch{\Phi}_n}}
    \Phi T \ch{\Phi}_{n} \xrightarrow{\textstyle \Phi \ch{\rho}_n}
    \Phi \ch{\Phi}_{n} = \ch{\Phi}_{n+1}
  \end{align*}
  That $\ch{\rho}$ is a morphism in $\ch{\Cat{C}}$ follows easily
  by induction, and by using naturality of $\rho$ functoriality of $\Phi$.
  \qedhere
\end{proof}
\begin{longproof}
  We define $\ch{\rho}_n$ by iteration on $n$:
  \begin{align*}
    \ch{\rho}_0 & = T\T \xrightarrow{!} \T \\
    \ch{\rho}_{n+1} & =
    \displaystyle
    T \Phi \ch{\Phi}_{n} \xrightarrow{\textstyle \rho_{\ch{\Phi}_n}}
    \Phi T \ch{\Phi}_{n} \xrightarrow{\textstyle \Phi \ch{\rho}_n}
    \Phi \ch{\Phi}_{n} = \ch{\Phi}_{n+1}
  \end{align*}
  To prove that $\ch{\rho}$ is a morphism in $\ch{\Cat{C}}$, that is, that
  for all $m \leq n$ we have
  $\ch{\rho}_m \comp T \ch{\Phi}(m \leq n)
  = \ch{\Phi}(m \leq n) \comp \ch{\rho}_n$,
  we proceed by induction
  on $n$ and $m$.
  The cases that $n$ or $m$ are $0$, are easy.
  For the induction step, we need that the following diagram commutes.
  \begin{equation*}
    \begin{tikzcd}[column sep=large]
      T\Phi\ch{\Phi}_n \rar{\textstyle \rho_{\ch{\Phi}_n}}
      \dar{\textstyle T\Phi\ch{\Phi}(m \leq n)}
      & \Phi T \ch{\Phi}_n \rar{\textstyle \Phi \ch{\rho}_n}
      \dar{\textstyle \Phi T \ch{\Phi}(m \leq n)}
      & \Phi\ch{\Phi}_n
      \dar{\textstyle \Phi\ch{\Phi}(m \leq n)}
      \\
      T\Phi\ch{\Phi}_m \rar{\textstyle \rho_{\ch{\Phi}_m}}
      & \Phi T \ch{\Phi}_m \rar{\textstyle \Phi \ch{\rho}_m}
      & \Phi\ch{\Phi}_m
    \end{tikzcd}
  \end{equation*}
  The left rectangle commutes by naturality of $\rho$, while the right triangle
  commutes by functoriality of $\Phi$ and the the induction hypothesis.
  \qedhere
\end{longproof}

\begin{remark}
  Pous and Rot~\cite{Pous17:CompanionCodensCausal} prove a result similar to
  \iThmRef{compat-proof-rules}, namely that a monotone function $T$ on a
  complete lattice is below the companion of $\Phi$ if and only if there is a
  map $\chLift{T}\ch{\Phi} \to \ch{\Phi}$.
  This result is equivalent to \iThmRef{compat-proof-rules} because the
  companion itself is compatible.
  \qedDef
\end{remark}

Let $\Delta_n \from \Cat{C} \to \Cat{C}^n$ be the diagonal functor and put
$\Phi^{\times n} =
\underbrace{\Phi \times \dotsm \times \Phi}_{n \text{-times}}$.
We then obtain the following corollary of \iThmRef{compat-proof-rules}, which
allows us its application to compatible up-to techniques that have
$n$ arguments.
For example, the transitive closure of a relation requires $2$ arguments,
see~\cite{Bonchi:UpTo-Fib-LICS} for details.
\begin{corollary}
  Let $n \in \N$ and $T \from \Cat{C}^n \to \Cat{C}$ be a functor.
  If there is a natural transformation
  $\rho \from T \Phi^{\times n} \natTo \Phi T$, then there is a map
  $\ch{\rho} \from \chLift{T} (\chLift{\Delta_n} \ch{\Phi}) \to \ch{\Phi}$ in
  $\ch{\Cat{C}}$.
\end{corollary}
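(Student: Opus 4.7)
The plan is to reduce the corollary to \iThmRef{compat-proof-rules} by composing with the diagonal and thus bringing everything back to the setting of a single endofunctor on $\Cat{C}$. First I would define the composite functor $T' = T \comp \Delta_n \from \Cat{C} \to \Cat{C}$, which is an honest endofunctor to which the theorem can be applied.

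Next I would manufacture the required natural transformation $\rho' \from T' \Phi \natTo \Phi T'$ out of $\rho$. The key observation is the (trivially checkable) identity $\Delta_n \comp \Phi = \Phi^{\times n} \comp \Delta_n$, since both send $X$ to the $n$-tuple $(\Phi X, \dotsc, \Phi X)$ and act correspondingly on morphisms. Hence
\begin{equation*}
  T' \Phi \;=\; T \comp \Delta_n \comp \Phi
  \;=\; T \comp \Phi^{\times n} \comp \Delta_n,
\end{equation*}
so whiskering $\rho$ on the right with $\Delta_n$ yields
$\rho' = \rho \Delta_n \from T \Phi^{\times n} \Delta_n \natTo \Phi T \Delta_n = \Phi T'$,
which is the natural transformation demanded by \iThmRef{compat-proof-rules}.

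Applying that theorem to $T'$ and $\rho'$ produces a morphism
$\ch{\rho'} \from \chLift{T'} \ch{\Phi} \to \ch{\Phi}$ in $\ch{\Cat{C}}$. All that then remains is to identify the domain of this morphism with the one in the statement. Since $\chLift{(-)}$ is the strict $2$-functor $\IntHom{\op{\omega}}{-}$ and therefore preserves composition of functors, we have
\begin{equation*}
  \chLift{T'} \ch{\Phi}
  \;=\; T \comp \Delta_n \comp \ch{\Phi}
  \;=\; \chLift{T}\bparens{\chLift{\Delta_n} \ch{\Phi}},
\end{equation*}
so $\ch{\rho} \colonequals \ch{\rho'}$ is the desired map. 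The only point requiring any care will be the book-keeping around the whiskering and the identity $\Delta_n \Phi = \Phi^{\times n} \Delta_n$; everything else is formal consequence of functoriality of $\chLift{(-)}$ and of the previously proved theorem.
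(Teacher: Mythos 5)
Your proposal is correct and follows essentially the same route as the paper: define $T' = T \comp \Delta_n$, use the identity $\Delta_n \Phi = \Phi^{\times n}\Delta_n$ to whisker $\rho$ into a natural transformation $T'\Phi \natTo \Phi T'$, apply \iThmRef{compat-proof-rules}, and identify the domain via the fact that $\chLift{(-)}$ preserves composition of functors. No gaps.
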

\begin{longproof}
  Let $T' = T \comp \Delta_n \from \Cat{C} \to \Cat{C}$.
  Note now that $\Delta_n\Phi = \Phi^{\times n} \Delta_n$, thus we have
  \begin{equation*}
    T'\Phi = T \Phi^{\times n} \Delta_n
    \xRightarrow{\rho \Delta_n} \Phi T \Delta_n
    = \Phi T'.
  \end{equation*}
  From \iThmRef{compat-proof-rules}, we obtain a morphism
  $\ch{\rho \Delta_n} \from \chLift{T'}\ch{\Phi} \to \ch{\Phi}$.
  As functor liftings to chains compose (cf.~\iLemRef{seq-fibred-functor}),
  we can define $\ch{\rho} = \ch{\rho \Delta_n}$.
  \qedhere
\end{longproof}

Let us now move to the setting of fibrations.
For the remainder of this section, we assume to be given a functor
$F \from \BCat \to \BCat$ that describes the behaviour of coalgebras,
and a lifting $G \from \TCat \to \TCat$ of $F$ that describes a predicate
on $F$-coalgebras, see~\cite{Hasuo13:CoindPredFinalSeqFib} for a more detailed
introduction.
\begin{assumption}
  We assume to be given a map of fibrations $(F, G) \from p \to p$
  and a coalgebra $c \from X \to FX$ in $\BCat$.
  Moreover, we require that $\BCat$ has a final object.
\end{assumption}

Under these assumption, we can define a functor $\Phi \from \TCat_X \to \TCat_X$
by
\begin{equation*}
  \Phi \colonequals \reidx{c} \comp \, G \from \TCat_X \to \TCat_X,
\end{equation*}
which describes, what is often called, a \emph{predicate transformer}.
A coalgebra for $\Phi$ is then referred to as a \emph{$\Phi$-invariant}.
One can now talk about up-to techniques for $G$ and for $\Phi$.
Both kinds are related by the following result, which allows us to obtain
compatible up-to techniques on fibres from global ones.
\begin{theorem}
  Let $T \from \TCat \to \TCat$  be a a lifting of the identity $\Id_{\TCat}$.
  If there is natural transformation $\rho \from TG \natTo GT$ with
  $P\rho = \id F \from F \natTo F$, then there is a natural transformation
  $\rho^c \from T\Phi \natTo \Phi T$ with $P\rho^c = \id \from \Id \natTo \Id$.
\end{theorem}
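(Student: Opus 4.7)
The plan is to build $\rho^c$ as a straightforward composite, using that $T$ being a lifting of the identity means $T$ is vertical, and (presumably, as is standard when the paper calls something a ``lifting'' in the context of maps of fibrations) that $T$ preserves cartesian morphisms. The central lemma I will need is that $T$ commutes with the reindexing functor $\reidx{c}$, i.e.\ that there is a canonical natural isomorphism $\alpha \from T \reidx{c} \Rightarrow \reidx{c} T$ which is vertical. This follows by a universal-property argument: given $A \in \TCat_{FX}$, apply $T$ to the cartesian lifting $\cartL{c}{A} \from \reidx{c} A \to A$; since $T$ is fibred and lies over the identity, $T\cartL{c}{A}$ is still cartesian over $c$, so by universality of $\cartL{c}{TA}$ one obtains a unique vertical iso $T\reidx{c}A \cong \reidx{c}TA$, and this is natural in $A$.

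Once $\alpha$ is in place, the definition of $\rho^c$ writes itself as the composite
\begin{equation*}
  T \Phi \;=\; T \reidx{c} G
  \;\xrightarrow{\alpha G}\; \reidx{c} T G
  \;\xrightarrow{\reidx{c} \rho}\; \reidx{c} G T
  \;=\; \Phi T,
\end{equation*}
where we use that $\reidx{c}$ is a functor (so it can be applied to the natural transformation $\rho$) and that $GT = GT$ genuinely as functors because both $G$ and $T$ act on the whole of $\TCat$. Naturality of $\rho^c$ is inherited from naturality of $\alpha$ and $\rho$ together with functoriality of $\reidx{c}$.

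For the verticality condition $p\rho^c = \id$, observe that each factor of the composite is vertical: $\alpha$ is vertical by construction above, and $\reidx{c}\rho$ is vertical because $\rho$ itself is vertical ($p\rho = \id_F$) and $\reidx{c}$ sends vertical morphisms to vertical morphisms (reindexing functors act between fibres). The main obstacle, and the only place one needs to do any real work, is pinning down the vertical iso $\alpha \from T\reidx{c} \cong \reidx{c}T$; everything else is formal diagram composition. Note that no use is made of the specific coalgebra structure $c \from X \to FX$ beyond its being a morphism in $\BCat$, so the argument in fact works uniformly in $c$.
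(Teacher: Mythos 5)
Your construction is essentially the paper's, just factored into two steps. The paper defines $\rho^c_X$ in one go as the unique vertical mediating morphism obtained by factoring the composite $\rho_X \comp T\cartL{c}{GX} \from T\reidx{c}GX \to GTX$ (which lies over $c$ because $T$ lies over the identity and $\rho$ is vertical) through the cartesian morphism $\cartL{c}{GTX}$, with naturality coming from uniqueness of such factorisations. Your composite $\reidx{c}\rho \comp \alpha G$ satisfies exactly that same universal property --- postcomposing with $\cartL{c}{GTX}$ and unwinding the definitions of $\alpha$ and of $\reidx{c}$ on morphisms gives back $\rho_X \comp T\cartL{c}{GX}$ --- so the two definitions coincide. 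The one point you should repair is the auxiliary hypothesis that $T$ preserves cartesian morphisms. The statement only calls $T$ a lifting of the identity, i.e.\ a functor with $p \comp T = p$, and in the up-to-techniques setting such liftings are in general not fibred; the paper's proof nowhere uses fibredness of $T$. Fortunately you do not need it either: you only ever use $\alpha$ in the direction $T\reidx{c} \natTo \reidx{c}T$, and that comparison morphism exists for any $T$ over the identity, since $T\cartL{c}{A}$ still lies over $c$ and hence factors uniquely and vertically through the cartesian $\cartL{c}{TA}$. What fails without fibredness is only the claim that $\alpha$ is an isomorphism, which the rest of your argument never exploits. Dropping the iso claim (and with it the extra hypothesis) leaves a correct proof that agrees with the paper's.
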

\begin{longproof}
  We define $\rho^c_X \from T\Phi X \to \Phi TX$ as the mediating morphism in
  the following diagram.
  \begin{equation*}
    \begin{tikzcd}[column sep=large]
      T\reidx{c}GX \rar{T\cartL{c}GX} \dar[dashed]{\rho^c_X}
      & TGX \dar{\rho_X} \\
      \reidx{c}GTX \rar{\cartL{c}GTX}
      & GTX
    \end{tikzcd}
    \overset{P}{\mapsto}
    \begin{tikzcd}[column sep=large]
      X \rar{c} \dar{\id_X}
      & FX \dar{\id_{FX}} \\
      X \rar{c}
      & FX
    \end{tikzcd}
  \end{equation*}
  That $\rho^c$ is natural follows by uniqueness of mediating morphisms.
  \qedhere
\end{longproof}

Similarly, one obtains also a descending chain for $G$.
\begin{lemma}
  Let $(F,G)$ be a lifting to $p \from \TCat \to \BCat$.
  Then $\ch{G} \in \TCCat_{\ch{F}}$.
\end{lemma}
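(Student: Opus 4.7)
The plan is to verify directly that $\pC(\ch{G}) = \ch{F}$, i.e.\ that $p \ch{G}_n = \ch{F}_n$ on objects and that $p \ch{G}(m \leq n) = \ch{F}(m \leq n)$ on all connecting morphisms, by induction on $n$. Since $(F, G)$ is a map of fibrations, I will make essential use of the strict commutativity $p \comp G = F \comp p$, which propagates through the recursive definition of the chain in \iDefRef{fin-chain}.

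First I would handle the base case, which is the only non-mechanical step. The definition of $\ch{G}$ starts at $\ch{G}_0 = \T$, the global final object of $\TCat$, while $\ch{F}_0 = \T$ is the global final object of $\BCat$. Our standing assumption that $\BCat$ has a final object, together with the assumption that $p$ has fibred final objects, yields a global final object of $\TCat$ given by the fibred final $\T_{\T_{\BCat}}$ above $\T_\BCat$: for any $X \in \TCat_I$ one composes the unique vertical map $X \to \reidx{!_I}\T_{\T_\BCat}$ into the fibred final of the fibre with the Cartesian lifting of $!_I$. In particular $p(\T_\TCat) = \T_\BCat$, so $p \ch{G}_0 = \ch{F}_0$.

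The inductive step on objects is then automatic: assuming $p\ch{G}_n = \ch{F}_n$, we compute
\begin{equation*}
  p\, \ch{G}_{n+1} = p\, G(\ch{G}_n) = F\, p(\ch{G}_n) = F(\ch{F}_n) = \ch{F}_{n+1},
\end{equation*}
using $p \comp G = F \comp p$. For morphisms I would proceed by a parallel induction on the definition of $\ch{G}(m \leq n)$. The cases where $m = 0$ or $n = 0$ reduce, under $p$, to the unique morphisms into $\T_\BCat$ by finality, which is exactly $\ch{F}(m \leq n)$ in those cases. In the remaining case $\ch{G}(m' + 1 \leq n' + 1) = G\,\ch{G}(m' \leq n')$, so applying $p$ yields $F\,p\,\ch{G}(m' \leq n') = F\,\ch{F}(m' \leq n') = \ch{F}(m' + 1 \leq n' + 1)$ by the induction hypothesis.

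I do not anticipate a real obstacle: the argument is a bookkeeping induction that runs on the same recursive skeleton as \iDefRef{fin-chain}, driven entirely by the map-of-fibrations condition $p \comp G = F \comp p$. The only point that deserves to be made explicit is the identification of $\T_\TCat$ with a fibred final object over $\T_\BCat$, which ensures the induction gets off the ground in the right fibre.
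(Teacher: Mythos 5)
Your proof is correct, and it is the routine verification the paper evidently intends: the lemma is stated without proof, so there is no competing argument to compare against. The induction on $n$ driven by the strict commutativity $p \comp G = F \comp p$ is exactly the right skeleton, and you correctly isolate the one point that actually needs saying, namely that membership in the fibre $\TCCat_{\ch{F}}$ requires the strict equality $\pC(\ch{G}) = \ch{F}$, which forces one to take the final object of $\TCat$ used in \iDefRef{fin-chain} to be the fibred final object above $\T_{\BCat}$ (available by the standing assumptions that $p$ has fibred final objects and $\BCat$ has a final object), so that $p(\T) = \T$ on the nose and the base case, together with the finality argument for the connecting maps with $m = 0$, goes through.
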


The global chain $\ch{G}$ is again related to the local one $\ch{\Phi}$ as
follows.
From the coalgebra $c \from X \to FX$, we define a morphism
$\ch{c} \from K_X \to \ch{F}$ in $\TCCat$ iteratively by
\begin{equation*}
  \ch{c}_0 = \, !_X \from X \to \T
  \quad \text{ and } \quad
  \ch{c}_{n+1} = X \xrightarrow{c} FX \xrightarrow{F \ch{c}_n} \ch{F}_n.
\end{equation*}

Using $\ch{c}$, we can relate the global and local chains.
\begin{proposition}
  \label{prop:rel-lifting-transformer}
  In $\TCCat_X$, we can find isomorphisms
  \begin{itemize}
  \item  $\reidx{\ch{c}} \bparens{\ch{G}} \cong \ch{\Phi}$ and
  \item $\reidx{(\nextOp \comp \ch{c})} \bparens{\later \ch{G}}
    \cong \laterFib{X} \ch{\Phi}$.
  \end{itemize}
\end{proposition}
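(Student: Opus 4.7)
The plan is to establish both isomorphisms pointwise by induction on the chain index $n$, leveraging the fact that $(F,G)$ is a map of fibrations.

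For the first isomorphism we proceed by induction. At $n = 0$, both sides coincide with $\T_X$ up to canonical iso: since reindexing preserves fibred final objects, $\reidx{\ch{c}_0}(\ch{G}_0) = \reidx{!_X}(\T) \cong \T_X = \ch{\Phi}_0$. For the inductive step, the key ingredient is that, because $(F,G)$ is a map of fibrations, $G$ preserves Cartesian lifts, and hence there is a canonical vertical isomorphism $\reidx{Fu} \comp G \cong G \comp \reidx{u}$, natural in $u$ and its argument. Applying this with $u = \ch{c}_k$, together with the iterative definition $\ch{c}_{k+1} = F\ch{c}_k \comp c$ and functoriality of reindexing, we obtain
\begin{equation*}
  \reidx{\ch{c}_{k+1}}(\ch{G}_{k+1})
  = \reidx{c} \, \reidx{F\ch{c}_k}(G\ch{G}_k)
  \cong \reidx{c} \, G\bparens{\reidx{\ch{c}_k}\ch{G}_k}
  \cong \reidx{c} \, G(\ch{\Phi}_k)
  = \Phi(\ch{\Phi}_k) = \ch{\Phi}_{k+1},
\end{equation*}
invoking the inductive hypothesis in the penultimate step.

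The second isomorphism then reduces largely to the first. Using \iThmRef{later-next}, the component $(\nextOp \comp \ch{c})_n$ equals $!_X \from X \to \T$ at $n = 0$, and equals $\ch{F}(k \leq k+1) \comp \ch{c}_{k+1}$ at $n = k+1$. Because $\ch{c}$ is a morphism of chains over the constant chain $K_X$, this latter expression reduces to $\ch{c}_k$. Hence at $n = 0$ we recover $\reidx{!_X}(\T) \cong \T_X = (\laterFib{X}\ch{\Phi})_0$, and at $n = k+1$ the first isomorphism yields $\reidx{\ch{c}_k}(\ch{G}_k) \cong \ch{\Phi}_k = (\laterFib{X}\ch{\Phi})_{k+1}$.

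The main obstacle, beyond carefully invoking the map-of-fibrations hypothesis to extract the commutation $\reidx{Fu} \comp G \cong G \comp \reidx{u}$, lies in upgrading these pointwise isomorphisms to isomorphisms of chains in $\TCCat_X$. This requires a diagram chase verifying that the iso at level $k+1$ is compatible with the transition morphisms $\ch{G}(k \leq k+1)$ and $\ch{\Phi}(k \leq k+1)$. Since both transition families are built by iterated application of $G$ and $\Phi$ respectively, and the canonical commutation iso is natural in its argument, this compatibility follows by a parallel induction, and the pieces then assemble into the claimed isomorphisms in $\TCCat_X$.
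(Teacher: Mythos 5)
Your proof is correct and follows essentially the same route as the paper's: induction on $n$, splitting $\reidx{\ch{c}_{k+1}}$ along the composite $F\ch{c}_k \comp c$, commuting $G$ past reindexing via the map-of-fibrations property, and reducing the second isomorphism to the first by using that $\ch{c}$ is a morphism of chains. Your closing remark about upgrading the pointwise isomorphisms to an isomorphism of chains via naturality is a point the paper leaves implicit, so no objection there.
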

\begin{longproof}
  \begin{itemize}
  \item We prove $\bparens{\reidx{\ch{c}} \bparens{\ch{G}}}_n \cong \ch{\Phi}_n$
    by induction on $n$.
    For $0$, we have
    \begin{align*}
      \bparens{\reidx{\ch{c}} \bparens{\ch{G}}}_0
      & = \reidx{\ch{c}_0} \ch{G}_0 \\
      & = \reidx{!_X} (\T) \\
      & \cong \T_X
      \tag*{Fibred fin. objects} \\
      & = \ch{\Phi}_0
    \end{align*}
    For the induction step, we have
    \begin{align*}
      \bparens{\reidx{\ch{c}} \bparens{\ch{G}}}_{n+1}
      & = \reidx{\ch{c}_{n+1}} \ch{G}_{n+1} \\
      & = \reidx{\bparens{F\ch{c}_n \comp c}}(G(\ch{G}_n)) \\
      & \cong \reidx{c} \bparens{\reidx{\bparens{F\ch{c}_n}}(G(\ch{G}_n))}
      \tag*{Cloven fibration} \\
      & \cong \reidx{c} \bparens{G(\reidx{\ch{c}_n}\ch{G}_n)}
      \tag*{Lifting} \\
      & \cong \reidx{c} \bparens{G(\ch{\Phi}_n)}
      \tag*{Induction Hypothesis} \\
      & = \ch{\Phi}_{n+1}
    \end{align*}
  \item Second part:
    \begin{align*}
      & \bparens{\reidx{(\nextOp \comp \ch{c})} \bparens{\later \ch{G}}}_n \\
      & = \reidx{(\nextOp \comp \ch{c})_n} \bparens{\later \ch{G}}_n
      \tag*{Def. of reindexing}
      \\
      & =
      \begin{cases}
        \reidx{!_X}(\T), & n = 0 \\
        \reidx{\bparens{F^k! \comp \ch{c}_{k+1}}}(\ch{G}_k), & n = k+1
      \end{cases}
      \tag*{Def. of $\nextOp$}
      \\
      & \cong
      \begin{cases}
        \T_X, & n = 0 \\
        \reidx{\ch{c}_{k}}(\ch{G}_k), & n = k+1
      \end{cases}
      \tag*{Fibred final objects and $\ch{c}$ map of chains}
      \\
      & = \bparens{\laterFib{?} \bparens{\reidx{\ch{c}} \bparens{\ch{G}}}}_n
      \tag*{Def. $\laterFib{?}$}
      \\
      & \cong \bparens{\laterFib{?} \ch{\Phi}}_n
      \tag*{By first part}
    \end{align*}
    \qedhere
  \end{itemize}
\end{longproof}

From \iPropRef{rel-lifting-transformer}, we can obtain an alternative
proof of one of the central results (Thm.~3.7.i) by
Hasuo et al.~\cite{Hasuo13:CoindPredFinalSeqFib}.
\begin{corollary}
  We have
  $\lim \ch{\Phi} \cong \reidx{c_{\omega}}(\lim \ch{G})$,
  where $c_{\omega} \from X \to \lim \ch{F}$ is the unique map induced by
  $\ch{c}$ and the limit property.
\end{corollary}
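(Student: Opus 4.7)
The plan is to use \iPropRef{rel-lifting-transformer} and then reduce the statement to the fact that reindexing functors in a fibration preserve fibred limits.

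First, I would recall that the limit $\lim \ch{G}$, computed in $\TCat$, sits by fibrationality above $\lim \ch{F}$ together with projections $\pi_n \from \lim \ch{F} \to \ch{F}_n$, and is characterised as the fibred limit in $\TCat_{\lim \ch{F}}$ of the reindexed chain $\bparens{\reidx{\pi_n} \ch{G}_n}_n$. By definition of $c_\omega$, each component satisfies $\ch{c}_n = \pi_n \comp c_\omega$, and since $p$ is a cloven fibration we obtain a canonical isomorphism
\begin{equation*}
  \reidx{\ch{c}_n} \cong \reidx{c_\omega} \comp \reidx{\pi_n}.
\end{equation*}

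Combining this with the first isomorphism of \iPropRef{rel-lifting-transformer}, we get at each level
\begin{equation*}
  \ch{\Phi}_n \cong \reidx{\ch{c}_n}\bparens{\ch{G}_n}
  \cong \reidx{c_\omega}\bparens{\reidx{\pi_n} \ch{G}_n},
\end{equation*}
and these isomorphisms commute with the structure maps of the chains, since all reindexings are canonical. Hence $\ch{\Phi}$ is isomorphic, as a chain in $\TCat_X$, to the chain obtained by applying $\reidx{c_\omega}$ pointwise to $\bparens{\reidx{\pi_n} \ch{G}_n}_n$.

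Taking the limit in $\TCat_X$ and using that the reindexing functor $\reidx{c_\omega} \from \TCat_{\lim \ch{F}} \to \TCat_X$ preserves fibred limits (a standard property of fibrations, since reindexing is the right adjoint part of the Cartesian lifting), we conclude
\begin{equation*}
  \lim \ch{\Phi} \cong \lim_n \reidx{c_\omega}\bparens{\reidx{\pi_n} \ch{G}_n}
  \cong \reidx{c_\omega}\bparens{\lim_n \reidx{\pi_n} \ch{G}_n}
  \cong \reidx{c_\omega}(\lim \ch{G}),
\end{equation*}
which is exactly the claim. The main subtlety I expect is the bookkeeping around the decomposition of $\lim \ch{G}$ into its base part $\lim \ch{F}$ and a fibred limit above it; everything else is a clean application of \iPropRef{rel-lifting-transformer} together with the functoriality of cleavage.
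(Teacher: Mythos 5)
Your proposal follows essentially the same route as the paper's own proof: apply the proposition relating $\ch{\Phi}$ to $\reidx{\ch{c}}\ch{G}$, factor $\ch{c}_n = \pi_n \comp c_\omega$ and split the reindexing via cloven-ness, commute $\reidx{c_\omega}$ past the $\op{\omega}$-limit, and identify $\lim \ch{G}$ with the fibred limit of the reindexed chain $\bparens{\reidx{\pi_n}\ch{G}_n}_n$ (for which the paper cites Lem.~3.5 of Hasuo et al. rather than re-deriving it). The one small inaccuracy is your justification that reindexing preserves the limit because it is ``the right adjoint part of the Cartesian lifting'': the correct reason is simply that the limits in question are \emph{fibred}, i.e.\ preserved by reindexing by definition; a genuine adjoint to $\reidx{c_\omega}$ would be an opreindexing and would require a bifibration, which is not assumed here.
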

\begin{proof}
  We have
  \begin{align*}
    \SwapAboveDisplaySkip
    \lim \ch{\Phi}
    & \cong \lim (\reidx{\ch{c}}\ch{G})
    \tag{By \iPropRef{rel-lifting-transformer}} \\
    & \cong \lim (\reidx{\ch{\pi \comp c_\omega}}\ch{G})
    \tag{Def. of $c_\omega$} \\
    & \cong \lim (\reidx{\ch{c_\omega}} (\reidx{\ch{\pi}} \ch{G}))
    \tag{Cloven fibration} \\
    & \cong \reidx{c_\omega} (\lim (\reidx{\ch{\pi}} \ch{G}))
    \tag*{Fibred limits} \\
    & \cong \reidx{c_\omega} (\lim \ch{G})
    \tag{\cite[Lem. 3.5]{Hasuo13:CoindPredFinalSeqFib}}
  \end{align*}
  \qedhere
\end{proof}

If the chain $\ch{\Phi}$ converges in $\omega$ steps, then we obtain
soundness and completeness for proofs given over $\ch{\Phi}$.
This result is a trivial reformulation of the usual construction of
final coalgebra.
However, the present formulation is more convenient in the context of the
the recursive proofs that we construct by appealing to the later modality,
as those will be maps in $\TCCat_X$.
\begin{proposition}
  Suppose $\nu \Phi$ is a coinductive predicate, that is, there is a
  final coalgebra  $\xi \from \nu \Phi \to \Phi(\nu \Phi)$.
  If $\Phi$ preserves $\op{\omega}$-limits, then
  maps $A \to \nu \Phi$ in $\TCat_X$ are given equivalently by
  maps $K_A \to \ch{\Phi}$ in $\TCCat_X$.
\end{proposition}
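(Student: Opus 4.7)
The plan is to identify $\nu \Phi$ with $\lim \ch{\Phi}$ in the fibre $\TCat_X$ and then appeal to the universal property of this limit, combined with the standard identification of natural transformations out of a constant functor with cones.

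First, I would observe that by \iLemRef{fibres-are-chain-cats} the category $\TCCat_X$ is the functor category $\chCat{\TCat_X}$, so a morphism $K_A \to \ch{\Phi}$ there is a natural transformation from the constant chain $K_A$, i.e., a family $(g_n \from A \to \ch{\Phi}_n)_{n \in \N}$ in $\TCat_X$ satisfying $\ch{\Phi}(m \leq n) \comp g_n = g_m$ for all $m \leq n$. This is precisely a cone with apex $A$ over the diagram $\ch{\Phi} \from \op{\omega} \to \TCat_X$, so the collection of such morphisms is in bijection with the set of cones from $A$ to $\ch{\Phi}$, naturally in $A$.

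Second, I would invoke the dual of Adámek's theorem: since $\Phi$ preserves $\op{\omega}$-limits and a final coalgebra $\xi \from \nu \Phi \to \Phi(\nu \Phi)$ is assumed to exist, the limit $\lim \ch{\Phi}$ exists in $\TCat_X$ and is carried by $\nu \Phi$. The canonical cone $\pi \from K_{\nu \Phi} \to \ch{\Phi}$ is defined recursively by $\pi_0 = \, !$ and $\pi_{n+1} = \Phi(\pi_n) \comp \xi$. Universality of $\pi$ follows because $\op{\omega}$-continuity of $\Phi$ gives $\Phi(\lim \ch{\Phi}) \cong \lim (\Phi \ch{\Phi}) \cong \lim \ch{\Phi}$, the last isomorphism holding because the terminal $\T_X$ at stage $0$ does not affect the limit; this equips $\lim \ch{\Phi}$ with a $\Phi$-coalgebra structure that must agree with $\xi$ up to unique isomorphism by finality of $\nu \Phi$.

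Combining both steps, morphisms $K_A \to \ch{\Phi}$ in $\TCCat_X$ correspond bijectively and naturally in $A$ to morphisms $A \to \lim \ch{\Phi} \cong \nu \Phi$ in $\TCat_X$, which is the claim. I expect the main subtlety to be the verification that $\nu \Phi$ indeed realises the limit of the final chain: this is a classical Adámek-style argument, but it crucially requires the $\op{\omega}$-continuity hypothesis in order to transport the coalgebra structure through the limit and thereby promote the iteratively constructed cone $\pi$ to a universal one.
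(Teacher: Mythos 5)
Your overall route is exactly the one the paper has in mind: the paper offers no written proof, remarking only that the statement is a ``trivial reformulation of the usual construction of final coalgebras'', and your two steps --- identifying morphisms $K_A \to \ch{\Phi}$ in $\TCCat_X = \chCat{\TCat_X}$ with cones over the diagram $\ch{\Phi} \from \op{\omega} \to \TCat_X$, and then identifying $\nu \Phi$ with $\lim \ch{\Phi}$ --- are precisely that reformulation. The first step is correct and, once the second is granted, the conclusion is immediate from the universal property of the limit.

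The one place where your argument has a genuine gap is the sentence claiming that, since $\Phi$ preserves $\op{\omega}$-limits and a final coalgebra exists, the limit $\lim \ch{\Phi}$ exists in $\TCat_X$ and is carried by $\nu \Phi$. The dual Ad\'amek argument runs in the opposite direction: \emph{if} $\lim \ch{\Phi}$ exists and $\Phi$ preserves it, then that limit carries a final coalgebra structure, which is then isomorphic to $\nu\Phi$. Mere existence of a final coalgebra, together with preservation of those $\op{\omega}$-limits that happen to exist, does not force $\lim \ch{\Phi}$ to exist --- and the standing assumptions of the paper only give fibred \emph{finite} limits in $\TCat_X$. Indeed, one can build a finitely complete Cartesian closed poset with an $\op{\omega}$-continuous monotone map that has a greatest post-fixed point $\nu\Phi$ while the chain $\T \geq \Phi\T \geq \Phi^2\T \geq \dotsm$ has two incomparable maximal lower bounds and hence no meet; taking $A$ to be the lower bound not sitting below $\nu\Phi$ breaks the claimed bijection. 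So you should either read ``preserves $\op{\omega}$-limits'' as implicitly asserting that $\TCat_X$ has such limits (as the paper itself does, e.g.\ in the corollary immediately preceding this proposition, which freely forms $\lim\ch{\Phi}$ and $\lim\ch{G}$), or add the existence of $\lim\ch{\Phi}$ as an explicit hypothesis; a direct verification that the canonical cone $\pi \from \nu\Phi \to \ch{\Phi}$ is limiting is not available from the stated hypotheses alone. Apart from this point --- an imprecision arguably inherited from the statement itself --- your proof is the intended one.
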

\section{Examples}
\label{sec:examples}

In this last section, we demonstrate how the framework that we developed can
be used to obtain recursive proofs for coinductive predicates over different
kinds of first-order logic.
The first example is thereby in the setting of set-based predicates.
\begin{example}
  In this example, we define a predicate on streams that expresses that a
  real-valued stream is greater than $0$ everywhere and use the developed
  framework to prove that a certain stream is in the predicate.
  This example is fairly straightforward, but still has all the ingredients
  to illustrate the framework.

  Let $F \from \SetC \to \SetC$ and $G \from \Pred \to \Pred$ be given by
  $F = \R \times \Id $ and
  $G(X, P) = (F X, \setDef{(a, x)}{a > 0 \conj x \in P})$.
  It is easy to show that $G$ is a lifting of $F$, and we obtain the predicate
  of streams that are larger than $0$ everywhere as the final coalgebra of
  the functor $\Phi \from \Pred_{\Str{\R}} \to \Pred_{\Str{\R}}$ with
  $\Phi = \reidx{\pair{\head, \tail}} \comp G$.

  Next, we define for $a \in \R$ the constant stream $\constStr{a}$ by
  the following stream differential equation (SDE)~\cite{HKR14:SDE}.
  \begin{equation*}
    \constStr{a}_0 = a
    \qquad (\constStr{a})' = \constStr{a}
  \end{equation*}
  Similarly, we can define the point-wise addition of streams by
  \begin{equation*}
    (s \oplus t)_0 = s_0 + t_0
    \qquad (s \oplus t)' = s' \oplus t'.
  \end{equation*}
  Finally, let $s \in \Str{\R}$ be given by the following SDE.
  \begin{equation*}
    s_0 = 1 \qquad s' = \constStr{1} \oplus s.
  \end{equation*}
  Our goal is to prove that $s$ is greater than $0$ everywhere, that is,
  we want to prove that $s$ is in the final coalgebra $\nu \Phi$ of the above
  $\Phi$.
  Since the tail $s'$ of $s$ defined of $\constStr{1} \oplus -$, the following
  up-to technique will be handy.
  Let us define $C \from \chCat{\Pred}_{\Str{\R}} \to \chCat{\Pred}_{\Str{\R}}$
  to be
  \begin{equation*}
    C(P) = \setDef{\constStr{1} \oplus t}{t \in P}.
  \end{equation*}
  One easily shows that $C$ is $\Phi$-compatible, that is,
  $C\Phi \subseteq \Phi C$.
  In fact, this follows from point-wise addition being causal,
  see~\cite{Rot15:Thesis,Pous17:CompanionCodensCausal}.
  Thus, we have by \iThmRef{compat-proof-rules} that
  $\chLift{C} \ch{\Phi} \sqsubseteq \ch{\Phi}$,
  where $\sqsubseteq$ is the point-wise inclusion of indexed predicates.

  Given an indexed predicate $\sigma \in \chCat{\Pred}_X$, we define
  \begin{equation*}
    \vdash \sigma \colonequals \chCat{\T}_X \sqsubseteq \sigma.
  \end{equation*}
  Hence, $\vdash \sigma$ holds if there is a morphism
  $\chCat{\T}_X \to \sigma$ in $\chCat{\Pred}_X$.
  Given $x \in X$, we define the predicate
  $x \inC \sigma$ in $\chCat{\Pred}_X$ to be the following exponential
  in $\chCat{\Pred}_X$.
  \begin{equation*}
    x \inC \sigma \colonequals \sigma^{K_{\set{x}}}.
  \end{equation*}
  Spelling out these definitions, one easily finds that
  \begin{equation*}
    \vdash x \inC \sigma
    \iff \all{n \in \N} x \in \sigma_n.
  \end{equation*}

  For brevity, let us write $\varphi \colonequals s \inC \ch{\Phi}$
  and $\later$ for $\laterFib{\Str{\R}}$.
  Using the previous results, we now obtain a proof for
  $\vdash \varphi$ as follows, where each proof step is given applying
  the indicated construction in $\chCat{\Pred}_{\Str{\R}}$.
  \begin{equation*}
    \AxiomC{}
    \UnaryInfC{$s_0 > 0$}
    \AxiomC{}
    \RightLabel{(Identity)}
    \UnaryInfC{$\later \varphi \vdash
      \later \parens*{s \inC \ch{\Phi}}$}
    \RightLabel{(Def. $C$)}
    \UnaryInfC{$\later \varphi \vdash
      \later \parens*{\constStr{1} \oplus s \inC \chLift{C}(\ch{\Phi})}$}
    \RightLabel{($C$ compatible)}
    \UnaryInfC{$\later \varphi \vdash
      \later \parens*{\constStr{1} \oplus s \inC \ch{\Phi}}$}
    \RightLabel{(Def. of $s$)}
    \UnaryInfC{$\later \varphi \vdash
      \later \parens*{s' \inC \ch{\Phi}}$}
    \RightLabel{($\later$ pres. products)}
    \BinaryInfC{$\later \varphi \vdash
      \later \parens*{s \inC \chLift{\Phi}(\ch{\Phi})}$}
    \RightLabel{($\later$ functor)}
    \UnaryInfC{$\later \varphi \vdash
      s \inC \later \parens*{\chLift{\Phi}(\ch{\Phi})}$}
    \RightLabel{(Step, \iThmRef{step})}
    \UnaryInfC{$\later \varphi \vdash s \inC \ch{\Phi}$}
    \RightLabel{(Löb)}
    \UnaryInfC{$\vdash \varphi$}
    \DisplayProof
  \end{equation*}
  Thus, we have obtained a proof that $s$ is greater than $0$ everywhere
  purely by applying the category theoretical constructions presented
  in this paper.
  \qedDef
\end{example}

  

The next example shows that the same category theoretical setup that we
used to prove something above, can also be used to define functions.
\begin{example}
  Given a set $A$, we define a functor $F$ and a lifting $G^A$
  to the family fibration $\FamSet \to \SetC$ as follows.
  \begin{align*}
    & F \from \SetC \to \SetC
    & & G^A \from \FamSet \to \FamSet \\
    & F = \T + \Id
    & & G^A(I, X)_{u \in \T + I} =
    \begin{cases}
      \T, & u = \inj_1 \> \unit \\
      A \times X_v, & u = \inj_2 \> v
    \end{cases}
  \end{align*}
  $F$ has as final coalgebra the predecessor function
  $\pred \from \EN \to \T + \EN$ on the natural numbers extended with
  one element that indicates infinity.
  The family of so-called partial streams $\PStr_A$~\cite{Basold17:Phd} is the
  final coalgebra of $\Phi^A = \reidx{\pred} \comp G^A$.
  Our goal is now to define for a given $f \from A \to B$ a map
  $\PStr(f) \from \PStr_A \to \PStr_B$.
  Unfortunately, the results in~\cite{Hasuo13:CoindPredFinalSeqFib} do not
  apply here.
  But one can still show that $\Phi^A$ preserves $\op{\omega}$-limits,
  hence maps into $\PStr_A$ are equivalently given by maps into the chain
  $\ch{\Phi^A}$.
  Hence, we can obtain $\PStr(f)$ equivalently as a map
  $\ch{\Phi^A} \to \ch{\Phi^A}$ in $\chCat{\Fam{\SetC}}_{\EN}$.
  Denoting by $\Rightarrow$ the exponential in this fibre,
  we can construct the desired map by applying the following ``proof'' steps,
  where we write $u \mid \pred u = t \vdash X \to Y$ if we construct
  a map in $\Fam{\SetC}_{\EN}$ with the constraint that the index $u \in \EN$
  fulfils $\pred(u) = t$.
  \begin{equation*}
    \AxiomC{}
    \UnaryInfC{$u \mid \pred \> u = \inj_1 \> \unit \vdash
      \later ! \from (\later \ch{\Phi^A} \Rightarrow \later \ch{\Phi^B})_u
      \times \ch{\Phi^A}_u
      \to \later \T$}
    \AxiomC{$(\ast)$}
    \RightLabel{Cases for $G$}
    \BinaryInfC{$u \vdash (\later \ch{\Phi^A} \Rightarrow \later \ch{\Phi^B})_u
      \times \ch{\Phi^A}_u
      \to \later (\chLift{\Phi} \ch{\Phi^B})_u$}
    \RightLabel{Index abstraction}
    \UnaryInfC{$
      (\later \ch{\Phi^A} \Rightarrow \later \ch{\Phi^B}) \times \ch{\Phi^A}
      \to \later \chLift{\Phi} \ch{\Phi^B}$}
    \RightLabel{Step}
    \UnaryInfC{$
      (\later \ch{\Phi^A} \Rightarrow \later \ch{\Phi^B}) \times \ch{\Phi^A}
      \to \ch{\Phi^B}$}
    \RightLabel{Abstraction}
    \UnaryInfC{$(\later \ch{\Phi^A} \Rightarrow \later \ch{\Phi^B})
      \to (\ch{\Phi^A} \Rightarrow \ch{\Phi^B})$}
    \RightLabel{$\later$ functor}
    \UnaryInfC{$\later(\ch{\Phi^A} \Rightarrow \ch{\Phi^B})
      \to (\ch{\Phi^A} \Rightarrow \ch{\Phi^B})$}
    \RightLabel{Löb}
    \UnaryInfC{$\T \to (\ch{\Phi^A} \Rightarrow \ch{\Phi^B})$}
    \RightLabel{Uncurry}
    \UnaryInfC{$\ch{\Phi^A} \to \ch{\Phi^B}$}
    \DisplayProof
  \end{equation*}
  The step $(\ast)$ is thereby given as follows, where we write
  $S$ for $\later \ch{\Phi^A} \Rightarrow \later \ch{\Phi^B}$.
  \begin{equation*}
    \AxiomC{$\later (f \comp \proj_1 \comp \proj_2)
      \from S_u
      \times (\later A \times \later \ch{\Phi^A}_v)
      \to \later B$ \ \ and}
    \noLine
    \UnaryInfC{$\ev \comp (\id \times \proj_2) \from
      S_u
      \times (\later A \times \later \ch{\Phi^A}_v)
      \to \later \ch{\Phi^B}_v$}
    \RightLabel{Pairing}
    \UnaryInfC{$u \mid \pred \> u = \inj_2 \> v \vdash
      S_u
      \times (\later A \times \later \ch{\Phi^A}_v)
      \to \later B \times \later \ch{\Phi^B}_v$}
    \RightLabel{$\later$ pres. $\times$}
    \UnaryInfC{$u \mid \pred \> u = \inj_2 \> v \vdash
      S_u
      \times \later (A \times \ch{\Phi^A}_v)
      \to \later (B \times \ch{\Phi^B}_v)$}
    \RightLabel{Unfold}
    \UnaryInfC{$u \mid \pred \> u = \inj_2 \> v \vdash
      S_u
      \times \ch{\Phi^A}_u
      \to \later (B \times \ch{\Phi^B}_v)$}
    \DisplayProof
  \end{equation*}
\end{example}

\begin{longproof}
We can also instantiate the framework in a quantitative setting.
\begin{example}
  We can now reason about coinductive quantitative predicates.
  For instance, we can use $F = \I \times \Id$, giving us streams
  over the interval $\I$.
  As a lifting of $F$ to $\qPred$ we consider
  \begin{equation*}
    G(X, \delta) =
    \parens*{FX, \lam[(a, x)] \frac{1}{2} a + \frac{1}{2} \delta(x)}.
  \end{equation*}
  We put again $\Phi = \reidx{\pair{\head, \tail}} \comp G$.
  Suppose the final valuation $\delta \from X \to \I$ with
  $\delta = \Phi(\delta)$ exists.
  For a stream $s$ we then have
  \begin{align*}
    \delta(s)
    & = \frac{1}{2} s_0 + \frac{1}{2} \delta(s') \\
    & = \frac{1}{2} s_0 + \frac{1}{2} \parens*{
      \frac{1}{2} s_1 + \frac{1}{2} \delta(s'')} \\
    & = \frac{1}{2} s_0 + \frac{1}{4} s_1 + \frac{1}{4} \delta(s'') \\
    & = \dotsm \\
    & = \sum_{n \in \N} \frac{1}{2^{n+1}} s_n.
  \end{align*}
  Hence, $\delta$ calculates the sum of all entries in the stream $s_n$
  weighted by their position.
  Note that the sum is well-defined because of the exponential decay,
  hence
\end{example}
\end{longproof}

\section{General Well-Founded Orders}
\label{sec:well-founded}

Up to this point, we have used $\omega$ as fixed set with a well-founded on it.
As it turns out, it is not necessary to make this restriction and one can
construct the later modality and the Löb rule for any set with a well-founded
order on it.
This is similar to the development
in~\cite[Sec. 8]{Birkedal12:GuardedDomainTheory}.
The difference, however, is that Birkedal~et~al. require that the well-founded
set is a complete Heyting algebra and internalise the predecessor in there.
We will, in contrast, use properties of the category $\Cat{C}$, in which we
construct the sequences.
This approach is more in line with the previous development.

\begin{assumption}
  We assume that $(I, <)$ is a well-founded order and that $\Cat{C}$ has for
  each $\alpha \in I$ limits of the shape $\coslice{\alpha}{I}$.
\end{assumption}

Given these assumptions, we use now
\begin{equation*}
  \chCat{(-)} = \IntHom{I}{-}.
\end{equation*}

On $\CC$, we define
\begin{equation*}
  (\later \sigma)_{\alpha} = \lim_{\beta < \alpha} \sigma_{\beta}
\end{equation*}
with
$\proj^{\alpha}_\beta \from
\lim_{\beta < \alpha} \sigma_{\beta} \to \sigma_{\beta}$.
Since for $\alpha' \leq \alpha$ and $\beta' \leq \beta < \alpha' \leq \alpha$
we have
$\sigma(\beta' \leq \beta) \comp \proj^{\alpha}_\beta = \proj^{\alpha}_{\beta'}$,
we obtain a unique morphism
\begin{equation*}
  (\later \sigma)_{\alpha} = \lim_{\beta < \alpha} \sigma_{\beta}
  \xrightarrow{(\later \sigma)(\alpha' \leq \alpha)}
  (\later \sigma)_{\alpha'} = \lim_{\beta < \alpha'} \sigma_{\beta}.
\end{equation*}

\begin{theorem}
  For every $\sigma \in \CC$ there is a unique map in $\CC$,
  dinatural in $\sigma$,
  \begin{equation*}
    \lob_\sigma \from \sigma^{\later \sigma} \to \sigma
  \end{equation*}
  such that for all $g$-contractive maps $f$ the
  map $\lobFib{c}_\sigma \comp \abstr{g}$ is a solution for $f$.
\end{theorem}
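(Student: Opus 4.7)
The plan is to mimic the construction of \iThmRef{lob}, but replace iteration on $\N$ with well-founded induction on $(I, <)$; the extra work is that at ``limit stages'' (elements $\alpha \in I$ with no immediate predecessor) we must glue together previously constructed data through the limit defining $(\later \sigma)_\alpha$.

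Concretely, I would define the family $\lob_{\sigma,\alpha} \from (\sigma^{\later \sigma})_\alpha \to \sigma_\alpha$ by well-founded recursion on $\alpha$. Fix $\alpha \in I$ and assume $\lob_{\sigma,\beta}$ has been defined for every $\beta < \alpha$, and that each $\lob_{\sigma,\beta}$ is compatible with the restriction maps of $\sigma^{\later \sigma}$ and $\sigma$, so that $\{\lob_{\sigma,\beta}\}_{\beta < \alpha}$ really is a partial morphism of chains. For every $\beta < \alpha$, form the composite
\begin{equation*}
  s^\alpha_\beta \colonequals
  (\sigma^{\later \sigma})_\alpha
  \xrightarrow{\sigma^{\later \sigma}(\beta \leq \alpha)}
  (\sigma^{\later \sigma})_\beta
  \xrightarrow{\lob_{\sigma,\beta}}
  \sigma_\beta.
\end{equation*}
The inductive compatibility of the $\lob_{\sigma,\beta}$ with restrictions ensures that the family $(s^\alpha_\beta)_{\beta < \alpha}$ is a cone over $(\sigma_\beta)_{\beta < \alpha}$, so by the universal property of $(\later \sigma)_\alpha = \lim_{\beta < \alpha} \sigma_\beta$ it factors uniquely through a morphism
\begin{equation*}
  \step_\alpha \from (\sigma^{\later \sigma})_\alpha \to (\later \sigma)_\alpha.
\end{equation*}
Finally set $\lob_{\sigma,\alpha} \colonequals \ev_\alpha \comp \pair{\id, \step_\alpha}$, where $\ev$ is the counit of $(-) \times \later \sigma \dashv (-)^{\later \sigma}$. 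This is the direct analogue of the $n+1$ step in \iThmRef{lob}; the base case $\alpha = \min I$ (if present) is absorbed into the limit case, since $(\later \sigma)_{\min I}$ is an empty limit, hence terminal, and $\step_{\min I}$ is just the unique map into $\T$.

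Next I would verify that the $\lob_{\sigma,\alpha}$ assemble into a morphism of chains: given $\alpha' \leq \alpha$, the required square $\lob_{\sigma,\alpha'} \comp \sigma^{\later\sigma}(\alpha' \leq \alpha) = \sigma(\alpha'\leq \alpha) \comp \lob_{\sigma,\alpha}$ unfolds, using the definition of $\lob$ via $\ev$ and the naturality of $\ev$ in the variable object, to an identity that in turn reduces to the cone compatibility chosen above; this feeds the next step of the well-founded induction. Dinaturality of $\lob_\sigma$ in $\sigma$ is then checked in the same way: by well-founded induction on $\alpha$, using naturality of $\ev$ in both arguments and the induction hypothesis at all $\beta < \alpha$.

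For the key property, given any $g$-contractive $f \from \tau \times \sigma \to \sigma$ with $f = g \comp (\id \times \nextOp_\sigma)$, I would show by well-founded induction that $\lob_\sigma \comp \abstr{g} \from \tau \to \sigma$ is the unique solution. At stage $\alpha$, both sides are determined by their composite with $\ev_\alpha$ after pairing with, respectively, $\step_\alpha$ (for $\lob$) and $(\nextOp_\sigma)_\alpha$ applied to the candidate solution; these agree precisely when the solution at all $\beta < \alpha$ is already fixed, which is the induction hypothesis. Uniqueness of $\lob_\sigma$ itself comes from the uniqueness of $\step_\alpha$ into the limit.

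The main obstacle I expect is the limit-stage bookkeeping: in the $\omega$-case the restriction map $\sigma^{\later \sigma}(n \leq n+1)$ was used explicitly, whereas for a general well-founded order one must argue that the composites $\lob_{\sigma,\beta} \comp \sigma^{\later\sigma}(\beta \leq \alpha)$ form a genuine cone over the diagram $(\sigma_\beta)_{\beta < \alpha}$. This boils down to proving, as part of the induction, the stronger statement that the already-constructed $\lob$ at levels $\beta < \alpha$ is itself a morphism in $\CC$ over the full sub-order below $\alpha$ — i.e.\ the construction of $\lob_{\sigma,\alpha}$ and the verification that $\lob_\sigma$ is a chain morphism have to be carried out simultaneously by a single well-founded recursion, rather than sequentially.
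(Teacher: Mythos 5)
Your proposal follows essentially the same route as the paper's proof: both construct $\lob_{\sigma,\alpha}$ by well-founded recursion, obtain $\step_\alpha$ as the unique factorisation of the cone $\bigl(\lob_{\sigma,\beta} \comp \sigma^{\later\sigma}(\beta \leq \alpha)\bigr)_{\beta < \alpha}$ through $(\later\sigma)_\alpha = \lim_{\beta<\alpha}\sigma_\beta$, set $\lob_{\sigma,\alpha} = \ev_\alpha \comp \pair{\id,\step_\alpha}$, and carry the compatibility of $\lob$ with the restriction maps as part of the induction hypothesis. The point you flag as the main obstacle --- that the construction and the verification that $\lob_\sigma$ is a chain morphism must be performed in a single simultaneous well-founded induction --- is exactly how the paper's argument is organised, so the proposal is correct and matches the paper.
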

\begin{proof}
  We construct $\lob_{\sigma, \alpha}$ by well-founded induction on $\alpha$.
  Thus, assume for all $\beta < \alpha$ that
  $\lob_{\sigma, \beta} \from
  \parens*{\sigma^{\later \sigma}}_{\beta} \to \sigma_\beta$
  exists and fulfils for all $\beta' \leq \beta < \alpha$
  \begin{equation*}
    \sigma(\beta' \leq \beta)
    \comp \lob_{\sigma,\beta}
    =
    \lob_{\sigma,\beta'}
    \comp \parens*{\sigma^{\later \sigma}}(\beta' \leq \beta).
  \end{equation*}
  By functoriality of $\sigma^{\later \sigma}$, we thus obtain
  \begin{equation*}
    \sigma(\beta' \leq \beta)
    \comp \lob_{\sigma,\beta}
    \comp \parens*{\sigma^{\later \sigma}}(\beta' \leq \alpha)
    =
    \lob_{\sigma,\beta'}
    \comp \parens*{\sigma^{\later \sigma}}(\beta \leq \alpha).
  \end{equation*}
  This gives us a unique morphism
  \begin{equation*}
    \parens*{\sigma^{\later \sigma}}_{\alpha}
    \xrightarrow{\step_{\alpha}}
    (\later \sigma)_{\alpha}
  \end{equation*}
  by the limit property.
  This allows us to define
  \begin{equation*}
    \lob_{\alpha} =
    \parens*{\sigma^{\later \sigma}}_{\alpha}
    \xrightarrow{\pair{\id, \step_{\alpha}}}
    \parens*{\sigma^{\later \sigma}}_{\alpha} \times (\later \sigma)_{\alpha}
    \xrightarrow{\ev_{\alpha}}
    \sigma_{\alpha},
  \end{equation*}
  which fulfils for all $\beta \leq \alpha$ that
  \begin{equation*}
    \sigma(\beta \leq \alpha)
    \comp \lob_{\sigma,\alpha}
    =
    \lob_{\sigma,\beta}
    \comp \parens*{\sigma^{\later \sigma}}(\beta \leq \alpha)
  \end{equation*}
  because of naturality of $\ev_\alpha$ in $\alpha$ and
  \begin{equation*}
    (\later \sigma)(\beta \leq \alpha) \comp \step_\alpha
    = \step_\beta \comp \parens*{\sigma^{\later \sigma}}(\beta \leq \alpha).
  \end{equation*}
  This latter equation follows easily from the limit property and
  the definition of $\step$.
  Similarly, one also proves by the limit property that $\lob$ is the
  unique dinatural transformation that allows the construction of solutions.
\end{proof}
\section{Conclusion and Future Work}
\label{sec:concl}

In this paper, we have established a framework that allows us to reason about
coinductive predicates in many cases by using recursive proofs.
At the heart of this approach sits the so-called later modality, which was
comes from provability logic~\cite{Beklemishev1999:ParameterFreeInduction,%
  Smorynski1985:SelfReference,Solovay1976:ProvabilityModalLogic}
but was later used to obtain guarded recursion in type
theories~\cite{Appel07:ModalTypeSystem,Atkey13:GuardedRec,Bizjak16:GuardedDTT,%
  Nakano00:ModalityRec}
and in domain theory~\cite{Birkedal:GuardedRecUniverseFP,%
  Birkedal12:GuardedDomainTheory}.
This modality allows us to control the recursion steps in a proof without
having to invoke parity or similar conditions~\cite{%
  Brotherston07:CompleteSequentCalcIndInfDecent,Fortier2013-CutsCirc,%
  Santocanale02:CircProofs,Simpson17:CyclicArithmeticPeano},
as we have seen in the examples in \iSecRef{examples}.
Moreover, even though similar
Birkedal~et~al.~\cite{Birkedal12:GuardedDomainTheory} obtained similar results,
their framework is limited to $\SetC$-valued presheaves, while our results are
applicable in a much wider range of situations, see the examples in
\iSecRef{fib-seq}.

So what is there left to do?
For once, we have not touched upon how to automatically extract a syntactic
logic and models from the fibration $\chCat{L} \to \chCat{\mathcal{C}}$ obtained
in \iExRef{syntactic-logic-CCC}.
This would subsume and simplify much of the development in~\cite{Basold17:Phd}.
Next, we discussed already in \iRemRef{fibred-products-hard} that the
construction of fibre products for general morphisms in fibrations of descending
chains is fairly involved.
However, such a construction would be useful, for example, to obtain
Kripke models abstractly.
Finally, also a closer analysis of the relation to proof systems obtained through
parameterised coinduction, the companion or cyclic proof systems would be
interesting.

\bibliographystyle{splncs03}
\bibliography{RecCoindLogic}

\begin{thebibliography}{10}
\providecommand{\url}[1]{\texttt{#1}}
\providecommand{\urlprefix}{URL }

\bibitem{Abel:SizedTypes}
Abel, A., Pientka, B.: Wellfounded recursion with copatterns: A unified
  approach to termination and productivity. In: {{ICFP}}. pp. 185--196 (2013),
  \url{https://doi.org/10.1145/2500365.2500591}

\bibitem{Appel07:ModalTypeSystem}
Appel, A.W., Melli{\`e}s, P.A., Richards, C.D., Vouillon, J.: A very modal
  model of a modern, major, general type system. In: {{POPL}}. pp. 109--122.
  {ACM} (2007), \url{https://doi.org/10.1145/1190216.1190235}

\bibitem{Atkey13:GuardedRec}
Atkey, R., McBride, C.: Productive coprogramming with guarded recursion. In:
  {{ICFP}}. pp. 197--208. {ACM} (2013), \url{http://bentnib.org/productive.pdf}

\bibitem{Basold17:Phd}
Basold, H.: Mixed {{Inductive}}-{{Coinductive Reasoning}}: {{Types}},
  {{Programs}} and {{Logic}}. {{PhD Thesis}}, Radboud University, Nijmegen (To
  be published), \url{https://perso.ens-lyon.fr/henning.basold/thesis/}

\bibitem{BPR17:MonoidalCompany}
Basold, H., Pous, D., Rot, J.: Monoidal {{Company}} for {{Accessible
  Functors}}. In: {{CALCO}} 2017. LIPIcs, vol.~72. {Schloss Dagstuhl -
  Leibniz-Zentrum fuer Informatik} (2017),
  \url{https://perso.ens-lyon.fr/henning.basold/publications/MonoidalCompanyAccessible.pdf}

\bibitem{Beklemishev1999:ParameterFreeInduction}
Beklemishev, L.D.: Parameter {{Free Induction}} and {{Provably Total Computable
  Functions}}. Theor. Comput. Sci.  224(1-2),  13--33 (1999),
  \url{https://doi.org/10.1016/S0304-3975(98)00305-3}

\bibitem{Birkedal:GuardedRecUniverseFP}
Birkedal, L., M{\o}gelberg, R.E.: Intensional {{Type Theory}} with {{Guarded
  Recursive Types}} qua {{Fixed Points}} on {{Universes}}. In: {{LICS}}. pp.
  213--222. {IEEE Computer Society} (2013),
  \url{https://doi.org/10.1109/LICS.2013.27}

\bibitem{Birkedal12:GuardedDomainTheory}
Birkedal, L., M{\o}gelberg, R.E., Schwinghammer, J., St{\o}vring, K.: First
  steps in synthetic guarded domain theory: Step-indexing in the topos of
  trees. Logical Methods in Computer Science  8(4) (2012),
  \url{https://doi.org/10.2168/LMCS-8(4:1)2012}

\bibitem{Bizjak16:GuardedDTT}
Bizjak, A., Grathwohl, H.B., Clouston, R., M{\o}gelberg, R.E., Birkedal, L.:
  Guarded {{Dependent Type Theory}} with {{Coinductive Types}}. In:
  {{FoSSaCS}}. Lecture Notes in Computer Science, vol. 9634, pp. 20--35.
  {Springer} (2016), \url{https://arxiv.org/abs/1601.01586}

\bibitem{Bonchi:UpTo-Fib-LICS}
Bonchi, F., Petri{\c s}an, D., Pous, D., Rot, J.: Coinduction {{Up}}-to in a
  {{Fibrational Setting}}. In: Proc. of {{CSL}}-{{LICS}} '14. pp. 20:1--20:9.
  {ACM}, New York, USA (2014), \url{https://doi.org/10.1145/2603088.2603149}

\bibitem{Brotherston05:CyclicFOLInductiveDef}
Brotherston, J.: Cyclic {{Proofs}} for {{First}}-{{Order Logic}} with
  {{Inductive Definitions}}. In: Beckert, B. (ed.) Proceedings of {{TABLEAUX}}
  2005. Lecture Notes in Computer Science, vol. 3702, pp. 78--92. {Springer}
  (2005), \url{https://doi.org/10.1007/11554554_8}

\bibitem{Brotherston07:CompleteSequentCalcIndInfDecent}
Brotherston, J., Simpson, A.: Complete {{Sequent Calculi}} for {{Induction}}
  and {{Infinite Descent}}. In: Proceedings of {{LICS}} 2007. pp. 51--62. {IEEE
  Computer Society} (2007), \url{https://doi.org/10.1109/LICS.2007.16}

\bibitem{Cockett01:Deforestation}
Cockett, J.R.B.: Deforestation, program transformation, and cut-elimination.
  Electr. Notes Theor. Comput. Sci.  44(1),  88--127 (2001),
  \url{https://doi.org/10.1016/S1571-0661(04)80904-6}

\bibitem{Dax06:ProofSysLinearTimeMuCalc}
Dax, C., Hofmann, M., Lange, M.: A {{Proof System}} for the {{Linear Time}}
  $\mu$-{{Calculus}}. In: Arun-Kumar, S., Garg, N. (eds.) Proceedings of
  {{FSTTCS}} 2006. LNCS, vol. 4337, pp. 273--284. {Springer} (2006),
  \url{https://doi.org/10.1007/11944836_26}

\bibitem{Fortier2013-CutsCirc}
Fortier, J., Santocanale, L.: Cuts for circular proofs: Semantics and
  cut-elimination. In: {{CSL}}. pp. 248--262 (2013),
  \url{https://doi.org/10.4230/LIPIcs.CSL.2013.248}

\bibitem{Gimenez-RecursiveSchemes}
Gim{\'e}nez, E.: Codifying {{Guarded Definitions}} with {{Recursive Schemes}}.
  In: Selected {{Papers}} from the {{TYPES}} '94 {{Workshop}}. pp. 39--59.
  {Springer-Verlag}, London, UK (1995),
  \url{https://doi.org/10.1007/3-540-60579-7_3}

\bibitem{HKR14:SDE}
Hansen, H.H., Kupke, C., Rutten, J.: Stream {{Differential Equations}}:
  {{Specification Formats}} and {{Solution Methods}}. LMCS  13(1) (2017),
  \url{https://doi.org/10.23638/LMCS-13(1:3)2017}

\bibitem{Hasuo13:CoindPredFinalSeqFib}
Hasuo, I., Cho, K., Kataoka, T., Jacobs, B.: Coinductive {{Predicates}} and
  {{Final Sequences}} in a {{Fibration}}. Electronic Notes in Theoretical
  Computer Science  298,  197--214 (Nov 2013),
  \url{https://doi.org/10.1016/j.entcs.2013.09.014}

\bibitem{Hur13:ParameterizedCoind}
Hur, C.K., Neis, G., Dreyer, D., Vafeiadis, V.: The {{Power}} of
  {{Parameterization}} in {{Coinductive Proof}}. In: Proceedings of {{POPL}}
  2013. pp. 193--206. POPL '13, {ACM}, New York, NY, USA (2013),
  \url{https://doi.org/10.1145/2429069.2429093}

\bibitem{Jacobs1999-CLTT}
Jacobs, B.: Categorical {{Logic}} and {{Type Theory}}. No. 141 in Studies in
  Logic and the Foundations of Mathematics, {North Holland}, Amsterdam (1999)

\bibitem{Kelly82:ConceptsEnrichedCatTheory}
Kelly, M.: Basic {{Concepts}} of {{Enriched Category Theory}}. No.~64 in
  Lecture Notes in Mathematics, {Cambridge University Press}, reprints in
  theory and applications of categories, no. 10 (2005) edn. (1982)

\bibitem{Nakano00:ModalityRec}
Nakano, H.: A {{Modality}} for {{Recursion}}. In: {{LICS}}. pp. 255--266. {IEEE
  Computer Society} (2000), \url{https://doi.org/10.1109/LICS.2000.855774}

\bibitem{Niwinski96:GamesMuCalc}
Niwinski, D., Walukiewicz, I.: Games for the $\mu$-{{Calculus}}. Theor. Comput.
  Sci.  163(1\&2),  99--116 (1996),
  \url{https://doi.org/10.1016/0304-3975(95)00136-0}

\bibitem{Pous:FinalUpTo-LICS16}
Pous, D.: Coinduction {{All}} the {{Way Up}}. In: Grohe, M., Koskinen, E.,
  Shankar, N. (eds.) Proceedings of {{LICS}} '16. pp. 307--316. {ACM} (2016),
  \url{https://doi.org/10.1145/2933575.2934564}

\bibitem{Pous17:CompanionCodensCausal}
Pous, D., Rot, J.: Companions, {{Codensity}}, and {{Causality}}. In:
  Proceedings of {{FOSSACS}} 2017 (2017),
  \url{https://doi.org/10.1007/978-3-662-54458-7_7}

\bibitem{RosoLucanu09:CircCoinduction}
Ro{\c s}u, G., Lucanu, D.: Circular {{Coinduction}}: {{A Proof Theoretical
  Foundation}}. In: {{CALCO}}. LNCS, vol. 5728, pp. 127--144. {Springer}
  (2009), \url{https://doi.org/10.1007/978-3-642-03741-2_10}

\bibitem{Rot15:Thesis}
Rot, J.: Enhanced {{Coinduction}}. {{PhD}}, University Leiden, Leiden (2015)

\bibitem{Rot2014-EnhCoalgBisim}
Rot, J., Bonchi, F., Bonsangue, M., Pous, D., Rutten, J., Silva, A.: Enhanced
  {{Coalgebraic Bisimulation}}. MSCS  27(7),  1236--1264 (2017),
  \url{https://doi.org/10.1017/S0960129515000523}

\bibitem{Santocanale02:CircProofs}
Santocanale, L.: A {{Calculus}} of {{Circular Proofs}} and {{Its Categorical
  Semantics}}. In: {{FoSSaCS}}. pp. 357--371 (2002),
  \url{https://doi.org/10.1007/3-540-45931-6_25}

\bibitem{Santocanale02:muBicompleteParity}
Santocanale, L.: $\mu$-{{Bicomplete Categories}} and {{Parity Games}}. RAIRO -
  ITA  36(2),  195--227 (2002), \url{https://doi.org/10.1051/ita:2002010}

\bibitem{Shulman:FuncCatCCC}
Shulman, M.: Exponentials of cartesian closed categories (2012),
  \url{https://ncatlab.org/nlab/show/cartesian+closed+category\#exponentials_of_cartesian_closed_categories}

\bibitem{Simpson17:CyclicArithmeticPeano}
Simpson, A.: Cyclic {{Arithmetic}} is {{Equivalent}} to {{Peano Arithmetic}}.
  In: Proceedings of {{FoSSaCS}}'17. LNCS (2017),
  \url{https://doi.org/10.1007/978-3-662-54458-7_17}

\bibitem{Smorynski1985:SelfReference}
Smory{\'n}ski, C.: Self-{{Reference}} and {{Modal Logic}}. Universitext,
  {Springer-Verlag} (1985)

\bibitem{Solovay1976:ProvabilityModalLogic}
Solovay, R.M.: Provability interpretations of modal logic. Israel Journal of
  Mathematics  25(3),  287--304 (1976),
  \url{https://doi.org/10.1007/BF02757006}

\bibitem{Street10:ComprehensiveTorsors}
Street, R., Verity, D.: The comprehensive factorization and torsors. Theory and
  Applications of Categories  23(3),  42--75 (2010)

\bibitem{Streicher2018:FibredCats}
Streicher, T.: Fibred {{Categories}} {\`a} la {{Jean B{\'e}nabou}}.
  arXiv:math.CT/1801.02927  (2018), \url{https://arxiv.org/abs/1801.02927}

\end{thebibliography}

\end{document}